\documentclass[12pt,letterpaper]{article}
\usepackage[centertags]{amsmath}
\usepackage{amsfonts,amsthm,amssymb}
\usepackage{amssymb}
\usepackage{amsmath}
\usepackage{graphicx}
\usepackage{booktabs}
\usepackage{appendix}
\usepackage[hmargin=2.6cm,vmargin=2.6cm]{geometry}
\usepackage{etoolbox}
\usepackage{tikz}
\usepackage{float}
\usepackage{natbib}
\usepackage[todonotes={textsize=tiny}]{changes}
\usepackage{mathtools}
\usepackage[unicode=true,
 bookmarks=false,
 breaklinks=false,pdfborder={0 0 0},pdfborderstyle={},backref=false,colorlinks=true]
 {hyperref}
\hypersetup{
 citecolor=blue, linkcolor=blue, urlcolor=blue}
 
 \usepackage[T1]{fontenc}
\usepackage[latin9]{inputenc}
\usepackage{geometry}
\geometry{verbose,tmargin=2.5cm,bmargin=2.5cm,lmargin=2.3cm,rmargin=2.3cm}
\usepackage{color}
\usepackage{array}
\usepackage{amssymb}
\usepackage{setspace}

\bibliographystyle{ims}

\usepackage{xcolor,framed}
 \setlength {\marginparwidth }{2cm}

\interfootnotelinepenalty=100000
\setcounter{MaxMatrixCols}{10}

\theoremstyle{definition}
\linespread{1.5}\vfuzz2pt \hfuzz2pt

\def\supp{{\textrm{supp}}}

\newcommand{\p}[1]{\mathrm{P}\left[#1\right]}

\newcommand{\abs}[1]{\left| #1 \right|}

\newcommand{\bias}{\mathrm{bias}}

\newcommand{\fa}{\mathfrak{a}}
\newcommand{\fb}{\mathfrak{b}}

\newtheorem{lemma}{Lemma}

\newtheorem{proposition}{Proposition}

\newtheorem{theorem}{Theorem}

\newtheorem{claim}[theorem]{Claim}

\newtheorem{corollary}{Corollary}

\newtheorem{remark}{Remark}[section]

\newtheorem*{lem:main}{Lemma \ref{lem:main}}
\newtheorem{fact}{Fact}
\newcommand{\eps}{\varepsilon}

\begin{document}
\title{Social Media and Democracy
}
\author{Ronen Gradwohl\thanks{Department of Economics and Business Administration, Ariel University.  \protect\href{mailto:roneng@ariel.ac.il}{roneng@ariel.ac.il}. }  \and Yuval Heller\thanks{Department of Economics, Bar-Ilan University \& University of California San Diego. 
\protect\href{mailto:yuval.heller@biu.ac.il}{yuval.heller@biu.ac.il}. } \and Arye Hillman\thanks{Department of Economics, Bar-Ilan University. \protect\href{mailto:
arye.hillman@biu.ac.il}{arye.hillman@biu.ac.il}. }}
\date{\today}

\maketitle
\begin{abstract}
We study the ability of a social media platform with a political agenda to influence voting outcomes. Our benchmark is Condorcet's jury theorem, which states that the likelihood of a correct decision under majority voting increases with the number of voters. We show how information manipulation by a social media platform can overturn the jury theorem, thereby undermining democracy. We also show that sometimes the platform can do so only by providing information that is biased in the {\em opposite direction} of its preferred outcome. Finally, we compare  manipulation of voting outcomes through social media to manipulation through traditional media.
\end{abstract}

\noindent \textbf{Keywords:}  
Bayesian persuasion; Political agenda; Information manipulation; Condorcet Jury Theorem; Biased signals.
\textbf{JEL codes:} D72, 
D82, 
P16 

\pagebreak
\section{Introduction}\label{sec:introduction}
In early theories of voting \citep{hotelling1929stability, downs1957economic}, voters are fully informed and vote based on their given preferences. In subsequent expositions, however, voters do not know which policies or competing political candidates merit their support. They are then susceptible to influence by political advertising \citep{hillman1988domestic,grossman2001special,prat2002campaign}
and by competing endorsements of policies and candidates  \citep{grossman1999competing,lichter2017theories} through mass media. 

Voters can also be influenced by social media. Rather than competition between media institutions, network externalities create natural monopolies for users of social media. In addition, rather than being exposed to the same, public information, users of social media face private persuasion through algorithms that select and order the information available to them. Facebook's News Feed algorithm, for example, determines personally, for each user, the posts  are visible and the order in which they appear \citep{facebook}. Users of social media can also be influenced by censorship or by delays in access to information. Twitter, for example, places warning labels on tweets and sometimes blocks information dissemination by locking accounts \citep{twitter1, twitter2}.

In this paper, we study the ability of a monopoly social media platform with a political agenda to influence users' views through such selective transmission of information. In our model voters may have outside information and preferences that are independent of the social media platform's agenda; however, voters are not perfectly informed, and so do not know with certainty which candidate or policy to support. To what extent can the social media platform exploit this incomplete information?
	
Our beginning is Condorcet's jury theorem (\citeyear{de2014essai}), which shows the merit of 
majority voting. In the setting of the theorem, voters face a choice between two alternatives, A and B. One alternative is objectively better for all voters, but voters do not know a priori which is the preferred alternative. Each voter has limited information, expressed in probabilities, indicating which alternative is better. The jury theorem states that the better alternative is more likely to be chosen when the decision is made by majority voting than when a single individual makes the decision. As the number of voters increases, the likelihood of the decision under majority voting being correct approaches certainty. The theorem therefore vindicates democracy. 

We ask, in an extension of the theorem: Does a group of voters who obtain information through a social media platform that has a political agenda also necessarily successfully make the correct decision by majority voting? Or, can information manipulation by social media undermine the merit of majority voting?

We undertake our study and derive our results within the Bayesian persuasion framework \citep[see][for a survey]{kamenica2019bayesian}. The media platform commits to how it will manipulate information, a commitment that in practice is implemented through the algorithms used by media platforms to select and order the information that individuals obtain. Individuals are aware of the manipulation and make correct inferences using Bayes' rule based on the information available. We show that information manipulation can lead majority voting to result in the choice of the alternative sought by the social media platform rather than the alternative that is preferable for voters. Contrary to Condorcet's jury theorem,
the consequence of information manipulation can be that a single individual is more likely to
choose the socially preferred alternative than a group that decides by majority voting, even
when individuals are aware of and account for the platform's information manipulation. Democracy is thus undermined.

In the remaining parts of the introduction we demonstrate our main results using examples, and then briefly describe our general model and results. Section \ref{sec:lit} discusses the related literature. In Section \ref{sec:model} we formally present our model, and then, in Section \ref{sec:results}, we describe our results. Various extensions are presented in Section \ref{sec:extensions}, followed by a conclusion in Section \ref{sec:conclusions}.

\subsection{Illustrative Example}\label{sec:illustrative-example}
Before we proceed to the formal model we provide an illustrative example. We begin with Condorcet's jury theorem, proceed to the basic logic of Bayesian persuasion, and then show how social-media persuasion can overturn the conclusion of the theorem.

\paragraph{Condorcet's Jury Theorem}
There are two alternatives, A and B, one of which is ``better'' or socially preferred. Voters are a priori unsure about which of A and B is better, but they do have some limited prior information supporting one of the alternatives. Suppose that, independently for each voter, this prior information has accuracy 55\%---namely, there is a 55\% chance of it being correct. 
Specifically, if a voter has information indicating that one of the alternatives (say, A) is better, then this voter believes there is a 55\% chance that A is, in fact, better (and a 45\% chance that B is better). 

Thus, if 
each voter chooses the alternative that is more-likely to be better based on this prior information, then she has a 55\% chance of choosing the socially preferred alternative. 
Rather than a single voter deciding between A and B, suppose that the decision is made by majority voting. Condorcet's jury theorem states that, with sufficiently many voters, if each voter votes for the alternative that is more-likely to be better based on her own information, then the correct alternative is chosen with near certainty (due to the Law of Large Numbers).

\paragraph{Bayesian Persuasion}
We now introduce an information designer (the social media platform) that can send voters additional information in the form of a state-dependent message. The designer is interested in ensuring that alternative A is chosen,
regardless of which alternative is socially better. Suppose the designer can send one of two messages---either  ``A is better'' or ``B is better''. 
The designer has to choose a rule that determines which message to send as a function of the true socially-better alternative.
Importantly, we assume ($\grave{\textrm{a}}$ la \citealp{kamenica2011bayesian})
that the voters know the designer's rule. The simple rule of always sending ``A is better'' is useless because  the voters are aware that this signal is non-informative. A designer who wants to influence voters to vote for A would instead have to be more subtle.

Consider the following messaging scheme:
if A is better, the designer always sends the message ``A is better''; if B is better, the designer sends the message ``B is better'' with probability 30\%, and the ``incorrect'' message ``A is better'' with the remaining 70\% probability. What can voters infer from the designer's message? Observe first that if a voter obtains message ``B is better'' from the designer, then this voter can be certain that B is, in fact, better. 

By contrast, if a voter obtains message ``A is better'', then the voter cannot be certain about which alternative is better, since this message is sent both when A is actually better but also sometimes when B is better. The exact inference that a voter can make upon obtaining message ``A is better'' depends on this voter's prior information---whether it supports A or B---and is calculated by Bayes' rule. 
A simple calculation shows that a voter whose prior information supported A (i.e., she has a prior belief that $A$ is better with probability $55\%$), updates her posterior belief that $A$ is better to $64\%$:
\begin{align*}
\p{A|\mbox{``A is better''}} &=  \frac{\p{A\cap\mbox{``A is better''}}}{\p{\mbox{``A is better''}}}=\frac{100\%\times55\%}{100\%\times55\%+70\%\times45\%}\approx 64\%.
\end{align*}
Similarly, a voter whose prior  supported B updates her posterior belief that  $A$ is better to $54\%$:
\begin{align*}
\p{A|\mbox{``A is better''}} &=  \frac{\p{A\cap\mbox{``A is better''}}}{\p{\mbox{``A is better''}}}=\frac{100\%\times45\%}{100\%\times45\%+70\%\times55\%}\approx 54\%.
\end{align*}

\paragraph{Persuasion Overturns Condorcet}
We now consider the effect of the designer's message on voting outcomes, and show that the consequence is to overturn the necessity of the conclusion of Condorcet's jury theorem.
	Consider first a single voter who makes the decision based on which alternative is more likely to be better. Recall from the previous section that if this voter obtained message ``B is better'' from the designer, 
	then the voter is certain that $B$ is better, and will choose alternative B. If the voter obtained message ``A is better'', however, then her belief about the likelihood of $A$ being  better is either 64\% or 54\%, depending on her initial information. Importantly, however, {\em in both cases} A is more likely to be better, and so in both cases the voter will choose A upon obtaining this message.
	
	How likely is this voter to choose the alternative that is actually better? Recall that if A is actually better, then the designer always sends message ``A is better'', and the voter makes the correct choice and votes for $A$. However, if B is actually better, then the designer sends message ``B is better'' only with probability 30\%. If this happens, then the voter is certain that B is better and makes the correct choice, B. However, with the remaining 70\% probability, the designer sends message ``A is better'', after which the voter chooses the incorrect alternative, A. Thus, if B is actually better, the voter makes the correct choice with probability 30\% and the incorrect choice with probability 70\%.
	
Rather than a single voter deciding between A and B, suppose that the decision is made by majority voting 
in a large population. If A is actually better then, as above, every voter obtains message ``A is better'' and votes for A. 
If B is actually better, however, then on average 30\% of voters obtain message ``B is better'' and vote for B, whereas on average 70\% of voters obtain message ``A is better'' and vote for A. In this case, by the Law of Large Numbers, if there are sufficiently many voters then with very high probability a majority votes for A, although A is the {\em incorrect} alternative. Thus, if B is better, majority voting never leads to the correct alternative being chosen. Note that this is worse than a single voter making the decision, as then there was a 30\% chance that B is chosen when it is better.
	
	In this example we have illustrated how additional information provided by a self-interested designer can overturn the conclusion of Condorcet's jury theorem. When there is no designer, a large group deciding by majority rule is more likely to make the correct decision than a single voter deciding alone. With additional information from a designer, however, the conclusion is reversed: 
	A single voter makes the correct decision with  probability 30\% if B is better, but majority voting never leads to the correct decision with when B is better (and in both cases the correct decision is made when $A$ is better).

\paragraph{Negatively Biased Signals and the Limits of Persuasion}
We now expand on the example above in order to answer two questions. First, what are the properties of the messaging scheme used by the designer to overturn an election? In particular, is it necessarily biased towards the outcome desired by the designer? And second, can the designer always overturn the election, or are there  limits to persuasion?

We begin with the first question. Observe that in the messaging scheme described above the designer sends the message ``A is better'' more often than the message ``B is better''.
This scheme is thus {\em biased} towards A. In the example above, this particular scheme is not unique, and there are other messaging schemes---some of which are biased towards B rather than A---that also allow the designer to overturn the election. We now describe such a scheme.

Suppose that, if A is better, the designer sends the message ``A is better'' with probability 53\%, and the message ``B is better'' with probability 47\%. If B is better, the designer sends the message ``B is better'' with probability 57\% and the message ``A is better'' with probability 43\%. Notice that the designer is more likely to lie in favor of B---to send message ``B is better'' when A is actually better (47\%)---than to lie in favor of A---to send message ``A is better'' when B is actually better (43\%). This messaging scheme is thus biased towards B.

What is the effect of this scheme? Consider first a voter whose prior information supported A, so that this voter initially believes A to be better with probability 55\%. If this voter now obtains message ``B is better'' from the designer, the voter updates the probability of A being correct by conditioning on the obtained message. The new belief is:

\begin{align*}
\p{A|\mbox{``B is better''}} &=  \frac{\p{A\cap\mbox{``B is better''}}}{\p{\mbox{``B is better''}}}=\frac{47\%\times55\%}{47\%\times55\%+57\%\times45\%}\approx 50.2\%.
\end{align*}

Thus, a voter who initially believed A to be better with probability 55\%, updates this belief to 50.2\% upon obtaining message ``B is better'' from the designer. Critically, this updated belief is above 50\%, and so the voter still believes that A is more likely to be the better alternative. Similarly, if this same voter obtained message ``A is better'' instead of ``B is better'', the updated belief would be even higher (around 60\%).

Consider now a voter whose prior information supported B, so that this voter initially believes B to be better with probability 55\% (and A to be better with probability 45\%). If this voter now obtains message ``A is better'' from the designer, the voter updates the probability of A being correct by conditioning on the obtained message. The new belief is:
\begin{align*}
\p{A|\mbox{``A is better''}} &=  \frac{\p{A\cap\mbox{``A is better''}}}{\p{\mbox{``A is better''}}}=\frac{53\%\times45\%}{53\%\times45\%+43\%\times55\%}\approx 50.2\%.
\end{align*}

Thus, a voter who initially believed A to be better with probability 45\%, updates this belief to 50.2\% upon obtaining message ``A is better'' from the designer. Critically, this updated belief is also above 50\%, and so the voter now believes A is more likely to be the better alternative. 

In fact, the only voters who vote for B are those whose prior information supported B, and who then additionally obtained message ``B is better'' from the designer. All other voters end up with a belief indicating that A is more likely to be better, and who then vote for A.

The final step is to determine the fraction of voters for A. If A is the better alternative, then only 45\% of voters' initial information supports B, and of those, only 47\% obtain message ``B is better''. Thus, on average only $45\%\times 47\%=21\%$ of voters vote for B, and the rest for A. With high probability, then, the majority vote for A. 
Similarly,  if B is the better alternative, then 55\% of voters' initial information supports B, and of those,  57\% obtain message ``B is better''. Thus, on average  $55\%\times 57\%=31\%$ of voters vote for B. Again, this implies that with high probability, the majority vote for A. Thus, regardless of which of A or B is actually better, the designer's messaging scheme leads to alternative A always being elected.

The example just described shows that the designer can overturn the election with a messaging scheme biased towards B---that is, towards the {\em opposite} alternative of the one preferred by the designer. In the previous section we showed that the election can also be overturned with a scheme biased towards A. So which will the designer use?

In Section \ref{sec:results} we characterize the biases of the schemes that can be used to overturn the election, and how they are related to the accuracy of voters' prior information. In particular, in Corollary~\ref{cor-homogenous} we show that, if this accuracy is between $1/2$ and $2/3$, then the designer can use both classes of schemes 
to overturn the election.  
However, we also show that if the accuracy is between $2/3$ and $\sqrt{2}/2$, then {\em only} schemes that are biased towards B can overturn the election. That is, in this case, the only way for the designer to overturn the election is to design a messaging scheme that is more likely to give incorrect information supporting B.

Finally, Corollary~\ref{cor-homogenous} also points to the limits of persuasion. The corollary states that if the voters' prior information has accuracy {\em greater than} $\sqrt{2}/2$ then there is no messaging scheme that allows the designer to overturn the election. The intuition is the following. In order to overturn the election, the designer needs to 
convince enough voters whose initial information supported B to switch to A. 
But the more accurate the voters' prior information, the more difficult it is to convince them to switch.
At the extreme, if voters' prior information is {\em perfectly} accurate, and so they know with certainty which of the alternatives is actually better, then clearly no additional information from the designer could ever shift any of their beliefs. In 
Corollary~\ref{cor-homogenous} we show that even with less extreme accuracies, if the accuracies are above $\sqrt{2}/2$ then there is no messaging scheme that allows the designer to convince a majority 
of the voters to vote for A.

\subsection{A General Model}
Our general model, of which the above example is a special case, shows how in more encompassing circumstances the designer is able to sway the vote towards his desired outcome. As in the example, in some cases, the preferred outcome of the designer is chosen with probability one. In other cases, the designer is unable to exert any influence at all. The ability of the designer to influence outcomes depends both on the extent to which users are exogenously informed and on the heterogeneity in the accuracy levels of users' exogenous information. 

We begin with a setting in which some voters are uninformed, whereas others are exogenously informed with the same accuracy. We show that, as long as this accuracy is not too high, the social media platform can completely sway the outcome of the vote. Interestingly, we also show that, in this case, the designer can sway the outcome of the vote with an unbiased signal. We also show that, in some cases, the designer can {\em only} sway the outcome with a signal biased towards B---that is, one that sends message ``B is better'' more often than the message ``A is better''. Finally, we show that when users' exogenous signal accuracy is high, the designer can no longer affect the outcome.

Next, we extend the model to a scenario in which there are two kinds of voters corresponding to two levels of exogenous-signal accuracy. Here we show that if both accuracies are sufficiently low, the designer can determine the outcome of the vote with probability one; if both accuracies are sufficiently high, the designer cannot exert any influence; and if accuracies are in an intermediate range, then whether or not the designer has influence depends on both the specific accuracies of information and the fractions of users of each kind. We show that, in this last case, there is a nonmonotonicity wherein the designer can determine the outcome if most users are of one kind and if most users are of the other kind, but not when there is a substantial fraction of each kind. There is also
a nonmonotonicity wherein the designer can determine the outcome of the vote when the two accuracies are sufficiently close or sufficiently far, but not in between. Finally, we provide conditions under which the designer can control the outcome of the vote only with a signal that is biased towards A and only with a signal that is biased towards B.

\section{Related Literature}\label{sec:lit}
A literature on social media and transmission of information to voters through the lens of Bayesian persuasion and information design has focused on studying the extent to which a holder of information can manipulate others by selective disclosure of  information
\citep{kamenica2011bayesian}. In the terminology of this literature, we study a setting of private Bayesian persuasion with multiple, informed receivers. We also restrict our designer to conditionally independent signals. 

A result that Condorcet's jury theorem can be overturned has been derived based on costly acquisition of information  \citep{mukhopadhaya2003jury,koriyama2009resurrection}. We retain the assumption of the theorem that individuals incur no cost in acquiring information. On the contrary, we describe information as imposed upon individuals through the social media platform without individuals actually seeking information.\footnote{For scenarios with costly voting, see \citet{sun2019theory}, who study the effects of public persuasion and cheap talk on voter turnout and \citet{chan2019pivotal}, 
who study the impact of costly voting on uninformed voters.}

A substantial literature conjoins an information designer with strategic voting, with conclusions relying critically on the interplay between the two (see next paragraph for details). In contrast, we view voters as personally optimally voting for the outcome that they believe is better or as voting ``sincerely''.   This allows us to focus on and cleanly characterize the direct effects of persuasion on voting outcomes. Strategic voting, in contrast, requires the assumption that voters believe that their vote can be pivotal (or decisive) in determining the outcome of a voting.\footnote{See, for example, \citet{pons2018expressive} and \citet{spenkuch2018expressive} and the references therein for the empirical literature on strategic voting. Although much of the literature detects some amount of strategic voting, in most instances the large majority of voters are not strategic.} A concession to realism is that voters in usual two-party elections for political representatives know that the likelihood of their one vote being decisive is essentially zero. We assume that voters vote sincerely based on their information and do not regard themselves as possibly decisive or pivotal.\footnote{A paradox of voting is present \citep{downs1957economic} 
if voters have a negligible probability of being pivotal; what reason does the voter have to vote? A benefit imputed to voting is expressive utility \citep{hillman2010expressive}. The act of voting is used to express an identity or sense of belonging. In this context, voting is rational.}   We suppose that voters enjoy an expressive benefit to voting for the correct alternative, and that all voters vote with no expectation that their vote will be pivotal in determining an election outcome.

 \citet{austen1996information} study strategic voting in the context of Condorcet's jury theorem; they describe strategic voters conditioning their behavior on being pivotal in an election, and show that it may no longer be personally optimal to vote sincerely. \citet{feddersen1998convicting} show that under general voting rules, such strategic voting may lead to a failure of information aggregation, but \citet{feddersen1997voting} show that, nonetheless, the jury theorem holds:  the probability that the outcome of the vote under majority voting is correct approaches certainty as the number of voters increases. \citet{heese2021persuasion} start with  the general strategic voting framework of \citet{feddersen1997voting}, and add an information designer. They show that, under strategic voting,  the designer can provide additional information in a way that leads to any preferred outcome with high probability. \citet{taneva2019information} studies a symmetric two-voter game; her main result is that, in this setting, it is never optimal for the designer to commit to a conditionally independent signal. More generally,  \citet{arieli2019private}  study a general framework for private persuasion, and characterize optimal solutions under submodularity assumptions on designer's utility. In our setting the utility function is neither supermodular nor submodular, so these results do not apply. Finally, \citet{kerman2020persuading} focus on persuading uninformed voters through correlated signals that induce sincere rather than strategic voting in equilibrium.

Other literature has studied public persuasion, where the designer commits to a signal and all voters observe the same realization, as opposed to private persuasion through social media where each voter obtains a personally distinct information realization.  Public Bayesian persuasion is an appropriate model for traditional media such as newspapers and television. Private persuasion is appropriate for social media, which can use personal information garnered from individuals' attributes and communications to design selective information directed at an individual. The literature includes 
 \citet{alonso2016persuading}, who study public persuasion of heterogeneous voters.
\citet{wang2013bayesian} compares public signals to private, conditionally independent signals, and shows that the former is more informative (and hence worse for the designer) than the latter.  
 \citet{gitmez2022polarization} also study public persuasion, with a focus on the interplay between the bias of the signal and voters' polarization. While less related to our work in terms of focus and results, the model of  \citet{gitmez2022polarization} is similar to ours in that there is a continuum of voters and each votes sincerely.

A sizable literature has studied bias in traditional and social media, both theoretically and empirically \citep[see][for a variety of surveys]{gentzkow2015media,puglisi2015empirical,stromberg2015media}. Much of this literature focuses on the informational effects of media bias, as well as its welfare implications. To the best of our knowledge, our insight that the 
manipulator's optimal bias 
is in the opposite direction of the intended outcome is new.

Last, in other related literature, \citet{denter2021social} study media bias when those it targets also have exogenous information, and share this information via a social network. They study the effects of the network's connectivity when voters suffer from correlation neglect on the optimal bias of media. \citet{sun2019theory} study the effects of public persuasion and cheap talk on turnout in a model with costly voting. Relatedly, \citet{levy2021social} perform a field experiment to study the effects of social media (specifically, Facebook) on user's political leanings. 

\section{Model}\label{sec:model}
There are two policies (or candidates) denoted $A$ and $B$, and a continuum of voters who must decide between them.
There are two, equally likely
states of the world, $\Theta=\{\theta_A,\theta_B\}$, such that the better policy is $A$ in state $\theta_A$ and $B$ in 
state $\theta_B$. 
Each voter receives a symmetric binary signal about the state of the world. Signal realizations are either $a$ or $b$, and a signal of accuracy $q$ is one where 
$$\p{a|\theta_A}=\p{b|\theta_B} = q.$$ 
We allow heterogeneity in the signal accuracies. Specifically, we assume that a $\lambda$-fraction 
(where $\lambda\in(0,1)$) of the population has low signal accuracy $q_\ell\in[0.5,1)$, and the rest have high signal accuracy $q_h\in[q_\ell, 1]$. We say that the population is \emph{homogeneous} if $q_\ell=q_h$, and that it is \emph{heterogeneous} otherwise.\footnote{In Section \ref{sec:extensions} we allow more general distributions over signal accuracies and extend some of our results to this setup.}

In addition to the voters, there is an information designer.
The designer can send the voters additional informative signals that depend on the state of the world. Unlike the voters, however, the designer wants to maximize the probability that policy $A$ is chosen, regardless of the state. In particular, the designer's utility is 1 whenever $A$ is chosen, and 0 otherwise. In order to influence the voters, the designer chooses a binary signal $s$---a probabilistic function from the state to a pair $\{\mathfrak{a},\mathfrak{b}\}$---and sends each voter a conditionally independent realization of $s$. 
We interpret this signal as having been sent through a social media platform that is controlled by the information designer.
As in the Bayesian persuasion framework \citep{kamenica2011bayesian}, we assume that the designer commits to $s$ prior to learning the true state of the world.  In addition, and without loss of generality, assume that the signal $s$ is such that $\p{\theta_A|\fa}\geq \p{\theta_A|\fb}$. If this were not the case, one could simply switch the meaning of the two signal realizations. 

\begin{remark}
The assumption that the designer's signal is binary is without loss of generality, given that the set of actions per voter is binary. This follows from familiar ideas from information design, but is stated and proved formally in Lemma~\ref{lem:binary-signal-suffices} of Appendix~\ref{apx:binary-opt}. 
\end{remark}

Define the {\em bias} of the designer's signal as the weighted difference in the sizes of the ``lies'' in the two directions, namely,
$$\bias(s)=2\cdot(\p{s(\theta_B)=\mathfrak{a}}\cdot\p{\theta_B}-\p{s(\theta_A)=\mathfrak{b}}\cdot\p{\theta_A})=\p{s(\theta_B)=\mathfrak{a}}-\p{s(\theta_A)=\mathfrak{b}},$$
where the expression after the first equality presents the definition for a general prior, and the expression after the latter equality relies on having a uniform prior. Thus, under the uniform prior, symmetric signals in which $\p{\mathfrak{a}|\theta_A}=\p{\mathfrak{b}|\theta_B}$ are {\em unbiased}---that is, they have bias 0. 
Positive biases correspond to signals that are more likely to yield the signal $\mathfrak{a}$, and negative biases to those that are more likely to yield the signal $\mathfrak{b}$. 
Observe that for the canonical example of Bayesian persuasion, in which a prosecutor seeks to convince a judge that a defendant is guilty (which corresponds to state $\theta_A$), the optimal signal is one with positive bias (\citealp{kamenica2019bayesian}). We note that this definition of bias is an instance of the definition of \citet{gentzkow2015media} for media bias, under the assumption that the fully informative signal in which $\p{s(\theta_A)=\fa}=\p{s(\theta_B)=\fb}=1$ has bias 0.

A strategy $\sigma_i$ of voter $i$ is a function from signals (both the original signal and the 
designer's signal) to a distribution over two actions---a vote for $A$ or a vote for $B$. The outcome of the vote is then determined by the majority, with ties decided in favor of $A$.\footnote{Tie-breaking in favor of $A$ is assumed for simplicity only---the results are nearly identical for other tie-breaking rules; see Remark \ref{rem-tie-breaking}.} We assume that voting is {\em sincere}: each voter votes according to their own information, which includes the prior and the two signals. 
Sincere voting can either directly reflect the voters' preferences (namely, each voter obtains utility 1 if she votes for the better policy), or it can correspond to voters who care for the selected outcome (i.e., each voter obtains utility 1 if the better policy is chosen), yet when voting they do not take into account the strategic effects of pivotality.\footnote{The probability of being pivotal is decreasing in the population size (for finite populations), and it is equal to zero in our continuum-population model.}

Thus, each voter votes for $A$ if her posterior on state $\theta_A$ is above $1/2$ and votes for $B$ if her posterior is below $1/2$. As is standard in the Bayesian persuasion literature, we assume voters vote for $A$, the outcome preferred by the designer, when their posterior is exactly $1/2$; however, the exact tie-breaking rule is unimportant for our results. 

Observe that if the designer sends an uninformative signal and either $q_\ell>0.5$ or $\lambda<0.5<q_h$, then $A$ is chosen in state $\theta_A$ and $B$ in state $\theta_B$. (In the complimentary case where either $q_h=0.5$ or $q_\ell=0.5<\lambda$, outcome $A$ is chosen in both states of the world, due to our tie-breaking rule.) Furthermore, due to the population being a continuum and signals being private and conditionally independent, the outcome of the vote is deterministic, conditional on the state of the world. This means that it is either the case that the designer  \emph{cannot manipulate the election's outcome} (in which case $A$ is still chosen only in state $\theta_A$, regardless of the designer's signal), or that he can \emph{perfectly manipulate} the election's outcome, such that $A$ is chosen in both states.
In the latter case, we say that a designer's signal $s$ is \emph{optimal} if it induces the choice of policy $A$ in both states.
 Note that there might be multiple optimal signals.

\section{Results}\label{sec:results}

\subsection{Preliminaries}\label{sec:preliminaries}
Suppose a voter's belief that the state is $\theta_A$ is equal to $1/2$. Then any additional signal $s:\Theta\mapsto\{\fa,\fb\}$ induces a pair of posteriors $(\alpha, \beta)$, where
$\alpha = \p{\theta_A|\fa}$ and $\beta=\p{\theta_A|\fb}$. The signal is uninformative if $\alpha=\beta=0.5$.
If the signal is informative we can assume without loss of generality that $0\leq\beta<\frac{1}{2}<\alpha\leq1$, and in this case there exists
a (unique) signal that induces these posteriors. 
Thus, we can identify 
any informative signal with the pair of posterior distributions it induces when starting with prior $\frac{1}{2}$.\footnote{Representing signals as distributions over posteriors is standard in Bayesian persuasion \citep{kamenica2019bayesian}. Representing them as posteriors starting with a prior of $\frac{1}{2}$ is common in the literature on social learning \citep[see, e.g.,][]{acemoglu2011bayesian,arieli2020identifiable,arieli2021herd}.}
For any $\alpha<\frac{1}{2}<\beta$, the exact mapping can be derived as follows:
First, since the expected prior is equal to the posterior, we have
\begin{align*}
\alpha p_\fa &+ \beta (1-p_\fa) = \frac{1}{2}
~~~~\Rightarrow ~~~~p_\fa = \frac{\frac{1}{2}-\beta}{\alpha-\beta}, ~~~~ p_\fb = \frac{\alpha-\frac{1}{2}}{\alpha-\beta},
\end{align*}
where $p_\fa = \p{\fa}$ and $p_\fb=\p{\fb}=1-p_a$.
Next, we have that
$$\p{s(\theta_A)=\fa}=\p{\fa|\theta_A} = \frac{\p{\theta_A|\fa}p_\fa}{\p{\theta_A}} = \frac{\alpha p_\fa}{1/2} = \frac{\alpha - 2\alpha \beta}{\alpha-\beta}~~ \Rightarrow ~~ \p{\fb|\theta_A}=\frac{2\alpha\beta-\beta}{\alpha-\beta}.$$
Similarly,
$$\p{s(\theta_B)=\fa}=\p{\fa|\theta_B} = \frac{\p{\theta_B|\fa}p_\fa}{1/2} = \frac{1-2\beta-\alpha+2\alpha\beta}{\alpha-\beta}~~\Rightarrow ~~\p{\fb|\theta_B}=\frac{2\alpha+\beta-1-2\alpha\beta}{\alpha-\beta}.$$

Furthermore, this analysis implies that the bias of the signal $(\alpha, \beta)$ is:
\begin{equation}\label{eq:bias-alpha-beta}
\bias(\alpha,\beta)=\p{s(\theta_B)=\fa}-\p{s(\theta_A)=\fb}=\frac{1-\beta-\alpha}{\alpha-\beta}.   \end{equation}

Next, we have the following lemma, which allows us to focus in our analysis on a small set of $2\times 3$ signals, such that, if the election is manipulable, one of these signals must be optimal (and any other optimal signal has weakly higher values of $\alpha$ and $\beta$).
\begin{lemma}\label{lem:opt-signals-general} 
Suppose that signal  $(\alpha', \beta')$ is optimal. Then the signal $(\alpha,\beta)$ is also optimal, where $\alpha=\max\big\{q\in \{q_\ell,q_h\}:q\leq \alpha'\big\}$ and $\beta=\max\big\{q\in \{0,1-q_h, 1-q_\ell\}:q\leq \beta'\big\}$.
\end{lemma}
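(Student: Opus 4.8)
The plan is to establish the lemma in two independent reductions: first showing that the value of $\beta$ can be lowered to the largest element of $\{0,1-q_h,1-q_\ell\}$ not exceeding $\beta'$, and then symmetrically that $\alpha$ can be lowered to the largest element of $\{q_\ell,q_h\}$ not exceeding $\alpha'$. The key observation is that whether a voter of accuracy $q\in\{q_\ell,q_h\}$ votes for $A$ after seeing the designer's realization $\fa$ (resp.\ $\fb$) depends only on whether her resulting posterior is $\geq 1/2$, and the posterior is monotone in $\alpha$ (resp.\ in $\beta$). So I would first write down, for a voter with exogenous signal accuracy $q$ who observes exogenous realization $a$ and designer realization $\fa$, the posterior on $\theta_A$ using Bayes' rule and conditional independence: something of the form $\frac{q\,\alpha}{q\,\alpha+(1-q)(1-\alpha)}$ (and the analogous expressions for the other three combinations of the two binary signals). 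Each such posterior is weakly increasing in $\alpha$ and weakly increasing in $\beta$ in the obvious coordinates; hence if signal $(\alpha',\beta')$ already induces every relevant voter-type to vote $A$ in both states (which is what optimality means, by the discussion preceding the lemma), then raising neither coordinate is needed — but I must check that \emph{lowering} $\beta'$ to the stated $\beta$ and $\alpha'$ to the stated $\alpha$ does not flip any voter away from $A$.

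For the $\beta$-reduction: the only voters whose vote could be affected by decreasing $\beta'$ to $\beta$ are those who receive designer-realization $\fb$. Decreasing $\beta'$ weakly decreases each such voter's posterior on $\theta_A$, which superficially looks dangerous. The resolution is that among the realizations $\fb$, the ``worst case'' voter (lowest posterior) is the one whose exogenous signal points to $B$, and her posterior is $\frac{(1-q)\beta}{(1-q)\beta+q(1-\beta)}$ for her accuracy $q$; since we only shrink $\beta'$ down to the next grid value $\beta\in\{0,1-q_h,1-q_\ell\}$, the monotonicity of this expression in $\beta$ means the set of voter-types induced to vote $A$ after $\fb$ can only \emph{change} at the grid points — and on each open interval between consecutive grid points the induced behavior of every type is constant. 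This is the crucial point: the thresholds in $\beta$ at which a type of accuracy $q$ switches its post-$\fb$ vote are exactly at values where the relevant posterior crosses $1/2$, and one checks these crossing values are precisely $0$, $1-q_h$, $1-q_\ell$ (the posterior of a $q$-accuracy voter with exogenous $b$ and designer $\fb$ equals $1/2$ iff $\beta=1-q$, etc.). Hence replacing $\beta'$ by the largest grid value $\leq\beta'$ leaves the post-$\fb$ behavior of every type unchanged, so the induced outcome (all vote $A$ in both states) is preserved. The symmetric argument handles $\alpha$, with switching thresholds for a $q$-accuracy voter seeing designer-realization $\fa$ occurring exactly at $\alpha\in\{q_\ell,q_h\}$ (and $\alpha=1$, which is already $\geq\alpha'$ so irrelevant as an upper grid point — here one only needs the grid $\{q_\ell,q_h\}$ since $\alpha'>1/2$ and the relevant crossing for the exogenous-$b$ voter is at $\alpha=q$).

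I also need to confirm the combined signal $(\alpha,\beta)$ is still a valid informative signal, i.e.\ $0\le\beta<\tfrac12<\alpha\le1$; this holds because $\beta\le\beta'<\tfrac12<\alpha'$, and $\alpha=\max\{q\in\{q_\ell,q_h\}:q\le\alpha'\}$ is well-defined and $>1/2$ exactly when $q_\ell>1/2$ or ($\lambda<0.5<q_h$) — precisely the non-degenerate case in which manipulation is a live question (in the degenerate case $A$ wins in both states for free and the lemma is vacuous/trivial). The parenthetical claim in the lemma statement — that any other optimal signal has weakly higher $\alpha$ and $\beta$ — then follows from the same monotonicity plus the minimality of the chosen grid values: if $(\alpha'',\beta'')$ is optimal then by the reduction its associated grid point $(\alpha,\beta)$ is optimal too, and running the argument shows $(\alpha,\beta)$ as constructed from $(\alpha',\beta')$ is the \emph{smallest} optimal grid point, so $\alpha''\ge\alpha$ and $\beta''\ge\beta$.

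\textbf{Main obstacle.} The one genuinely delicate step is verifying that the switching thresholds in $\beta$ (resp.\ $\alpha$) for \emph{all} relevant voter-types land exactly on the claimed finite grid — in particular that no intermediate type/realization combination creates a crossing of $1/2$ at some $\beta\notin\{0,1-q_h,1-q_\ell\}$. This requires enumerating the four (exogenous realization) $\times$ (designer realization) cases for each of the two accuracies and checking each posterior-equals-$1/2$ equation; the monotone structure makes this routine but it is the heart of the argument and must be done carefully, especially tracking which case gives the binding (lowest-posterior) voter after each designer realization.
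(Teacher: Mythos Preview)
Your core combinatorial observation---that each voter-type's behavior conditional on the realized pair (exogenous signal, designer signal) is constant as $\alpha'$ (resp.\ $\beta'$) varies within an open interval between consecutive grid points---is exactly the key step in the paper's proof. But you have misread what ``optimal'' means. An optimal signal is one that induces a \emph{majority} to vote for $A$ in state $\theta_B$; it does \emph{not} mean that every voter-type votes for $A$. (Look at Table~\ref{tab:6-signals}: all six canonical signals leave a positive share of voters for $B$ in state $\theta_B$.) Because of this misreading, your argument stops one step too early. Knowing that conditional behavior is unchanged when you lower $\beta'$ to $\beta$ is not enough, because lowering $\beta'$ also changes the \emph{distribution} of the designer's realizations $(\fa,\fb)$, and hence the aggregate vote share. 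The paper closes this gap by computing $\p{\fa\mid\theta_B}=\frac{(\alpha'-1)(2\beta'-1)}{\alpha'-\beta'}$ and observing it is \emph{decreasing} in both $\alpha'$ and $\beta'$; since voters are weakly more likely to vote $A$ after $\fa$ than after $\fb$, the share voting $A$ in $\theta_B$ can only go up. You never invoke this monotonicity, and under your (incorrect) ``all vote $A$'' reading you would not need to---which is precisely why the misreading is fatal rather than cosmetic.

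There is a second, smaller gap. The quantity $\alpha=\max\{q\in\{q_\ell,q_h\}:q\le\alpha'\}$ is well-defined only if $\alpha'\ge q_\ell$. Your justification (``well-defined exactly when $q_\ell>1/2$ or $\lambda<0.5<q_h$'') is not right: those conditions describe when manipulation is non-trivial, not when $\alpha'\ge q_\ell$. The paper disposes of this by a direct argument: if $\alpha'<q_\ell$, then after exogenous realization $b$ neither $\fa$ nor $\fb$ pushes any voter's posterior to $1/2$, so every voter with exogenous $b$ votes $B$; since $b$ is the majority exogenous realization in state $\theta_B$, the signal cannot be optimal. You should include this step explicitly.
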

The intuition for Lemma \ref{lem:opt-signals-general} is that for any signal, slightly decreasing either $\alpha$ or $\beta$ increases the probability of the $A$-favorable realization $\fa$, and that if $\alpha\notin\{q_\ell,q_h\}$ and $\beta\notin\{0,1-q_h, 1-q_\ell\}$, then this slight decrease has no 
effect on the behavior of any voter conditional on her signals. See Appendix \ref{proof-Lemma-possible} for the proof. Table \ref{tab:6-signals} summarizes the key properties of these $2\times 3$ optimal signals (which are derived from straightforward calculations). Note that the share of voters who vote for policy $A$ in state $\theta_A$ is always larger than in state $\theta_B$, and, thus, a signal is optimal if and only if it induces at least half of the agents to vote for policy $A$ in state $\theta_B$.

\begin{table}[h]
\caption{Key Properties of the $2\times3$ 
Signals of Lemma \ref{lem:opt-signals-general}}\label{tab:6-signals}
\smallskip

\scalebox{0.9}{\begin{tabular}{|>{\centering}p{2cm}|>{\centering}p{2.2cm}|>{\centering}p{2.4cm}|>{\centering}p{5cm}|>{\centering}p{5cm}|}
\hline 
$\left(\alpha,\beta\right)$ & $\p{s(\theta_{B})=\mathfrak{b}}$ & Bias & How voters vote & Share of $B$ voters in $\theta_{B}$\tabularnewline
\hline 
\hline 
$\left(q_\ell,0\right)$ & $2-\frac{1}{q_\ell}$ & $\begin{array}{c}
\\
\\
\end{array}\frac{1-q_\ell}{q_\ell}>0\begin{array}{c}
\\
\\
\end{array}$ & $q_{h}$: Vote $A$ iff $a$ AND $\mathfrak{a}$
\smallskip

$q_\ell$: Follow designer's signal & $1-\frac{\left(1-q_\ell\right)\left(1-q_{h}+\lambda q_{h}\right)}{q_\ell}$
\smallskip

(decreases in $\lambda$) \tabularnewline
\hline 
$\left(q_{h},0\right)$ & $2-\frac{1}{q_{h}}$ & $\begin{array}{c}
\\
\\
\end{array}\frac{1-q_{h}}{q_{h}}>0\begin{array}{c}
\\
\\
\end{array}$ & Follow designer's signal & $\frac{2q_{h}-1}{q_{h}}$
\smallskip

(independent of $\lambda$)\tabularnewline
\hline 
$\left(q_\ell,1-q_\ell\right)$ & $\begin{array}{c}
\\
\\
\end{array}q_\ell\begin{array}{c}
\\
\\
\end{array}$ & $\begin{array}{c}
\\
\\
\end{array}0\begin{array}{c}
\\
\\
\end{array}$ & $q_{h}$: Follow original signal
\smallskip

$q_\ell$: Vote $A$ iff $a$ OR $\mathfrak{a}$ & $q_{h}-\lambda\left(q_{h}-q_\ell^{2}\right)$
\smallskip

(decreases in $\lambda$) \tabularnewline
\hline 
$\left(q_\ell,1-q_{h}\right)\begin{array}{c}
\\
\\
\end{array}$ & $\frac{q_{h}(2q_\ell-1)}{q_{h}+q_\ell-1}$ & $\frac{q_{h}-q_\ell}{q_\ell+q_{h}-1}>0$ & $q_{h}$: Follow original signal
\smallskip

 $q_\ell$: Follow designer's signal & $\lambda\frac{q_{h}(2q_\ell-1)}{q_{h}+q_\ell-1}+\left(1-\lambda\right)q_{h}$
\smallskip

(decreases in $\lambda$) \tabularnewline
\hline 
$\left(q_{h},1-q_\ell\right)\begin{array}{c}
\\
\\
\end{array}$ & $\frac{q_\ell(2q_{h}-1)}{q_{h}+q_\ell-1}$ & $-\frac{q_{h}-q_\ell}{q_\ell+q_{h}-1}<0$ & Vote $A$ iff $a$ OR $\mathfrak{a}$ & $\left(\lambda q_\ell+\left(1-\lambda\right)q_{h}\right)\frac{q_\ell(2q_{h}-1)}{q_{h}+q_\ell-1}$
\smallskip

(decreases in $\lambda$) \tabularnewline
\hline 
$\left(q_{h},1-q_{h}\right)\begin{array}{c}
\\
\\
\end{array}$ & $q_{h}$ & 0 & $q_{h}$: Vote $A$ iff $a$ OR $\mathfrak{a}$
\smallskip

 $q_\ell$: Follow designer's signal & $q_{h}^{2}+\lambda q_{h}\left(1-q_{h}\right)$
 \smallskip

(increases in $\lambda$) \tabularnewline
\hline 
\end{tabular}
}
\end{table}

\begin{remark}\label{rem-tie-breaking}
The tie-breaking rule has no significant impact on our results. In particular, if one assumes the opposite, $B$-favorable tie-breaking rules (namely, each agent votes for $B$ when indifferent, and $B$ is chosen if supported by exactly half of the voters), then the only minor effect on our results will be that the optimal signals will be perturbed by a small $\varepsilon>0$. 
Formally, if the designer can manipulate the elections, then there would be an optimal signal of the form
$(\frac{1}{2}+\varepsilon,\beta+\varepsilon)$ or
$(\alpha+\varepsilon,\beta+\varepsilon)$, with
$\alpha\in\{q_\ell,q_h\}$ and $\beta\in\{0,1-q_h, 1-q_\ell\}$, for a sufficiently small $\varepsilon>0$.
\end{remark}

\subsection{Partially Uninformed Populations ($q_\ell=\frac{1}{2}$)}\label{sec:more-general-case}
In this section we focus on the case in which the low-accuracy signal is uninformative (i.e., $q_\ell=\frac{1}{2}$). That is, we assume that a share $\lambda$ of the population is uninformed, and the remaining agents obtain each a private signal with accuracy $q_h$. An interesting special case, discussed towards the end of the section, is that of homogeneous populations, in which all voters have the same signal accuracy (formally, $\lambda=0$).

Our first observation is that if the signal accuracy $q_h$ is below some threshold, denoted by $q_{NI}(\lambda)$, then policy $A$ is always chosen, even without any manipulation by the designer.\footnote{
 If one changes the tie-breaking rule to favor policy $B$, then under the same condition on $\lambda$, the designer can perfectly manipulate the election's outcome by sending  the  signal $(\frac{1}{2}+\varepsilon,1-q_h)$ for a sufficiently small $0<\varepsilon\ll1$. This signal is almost-uninformative in the sense that with an arbitrarily  high probability of $1-O(\varepsilon)$ the agents get the realization $\fa$, which only slightly increases the likelihood of state $\theta_A$, and which is sufficient to make the uninformed voters to strictly prefer policy $A$.} 

\begin{claim}\label{claim-1}
Policy $A$ is chosen in both states with an uninformative designer's signal  iff $$q_h\leq q_{NI}(\lambda)\equiv\frac{1}{2(1-\lambda)}.$$
\end{claim}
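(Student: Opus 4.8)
The plan is to observe that an uninformative designer's signal leaves each voter with only her exogenous signal, and then to compute the resulting vote shares in each state. First I would record the voting behavior under an uninformative designer's signal: an uninformed voter (accuracy $q_\ell=\tfrac12$) has posterior exactly $\tfrac12$ no matter which realization she sees, so by the stated tie-breaking convention she votes $A$; an informed voter (accuracy $q_h$) votes for the policy matching her own realization, since $q_h\ge\tfrac12$ makes her posterior on that policy at least $\tfrac12$. Hence in state $\theta_A$ the share voting $A$ is $\lambda+(1-\lambda)q_h$, while in state $\theta_B$ the share voting $A$ is $\lambda+(1-\lambda)(1-q_h)$ (equivalently, the share of $B$-voters in $\theta_B$ is $(1-\lambda)q_h$).

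Next I would note that the $\theta_A$ share is at least $\tfrac12+\tfrac{\lambda}{2}\ge\tfrac12$ because $q_h\ge\tfrac12$, so $A$ is always elected in state $\theta_A$ regardless of $q_h$; the binding constraint is therefore state $\theta_B$. Thus $A$ is chosen in both states if and only if $\lambda+(1-\lambda)(1-q_h)\ge\tfrac12$. Rearranging this gives $(1-\lambda)q_h\le\tfrac12$, i.e.\ $q_h\le\frac{1}{2(1-\lambda)}=q_{NI}(\lambda)$, which is exactly the claimed equivalence; and if this inequality is strictly violated then $\theta_B$ produces a strict majority of $B$-voters, so $B$ is chosen in state $\theta_B$, establishing the converse direction.

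There is essentially no serious obstacle here — the statement follows from a one-line computation once the vote shares are written down. The only points requiring (trivial) care are the tie-breaking convention that routes indifferent uninformed voters to $A$, and the verification that $\theta_B$ rather than $\theta_A$ is the state that pins down the threshold, which is immediate from $q_h\ge\tfrac12$. If desired, one can also phrase the argument entirely in terms of the continuum/conditional-independence structure noted in Section~\ref{sec:model}, whereby the $\theta_B$ share of $B$-voters is deterministically $(1-\lambda)q_h$, and then impose that this share be at most $\tfrac12$.
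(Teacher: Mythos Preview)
Your proof is correct and follows essentially the same approach as the paper: identify voting behavior under an uninformative designer signal (uninformed voters vote $A$ by tie-breaking, informed voters follow their exogenous realization), compute the $A$-share in each state, observe that $\theta_A$ always yields a majority for $A$, and solve the binding $\theta_B$ inequality $\lambda+(1-\lambda)(1-q_h)\ge\tfrac12$ to obtain the threshold. The only extra detail you add is the explicit lower bound $\tfrac12+\tfrac{\lambda}{2}$ for the $\theta_A$ share, which the paper handles with the terser remark that the informed $A$-voters alone already constitute a $q_h>\tfrac12$ share.
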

\begin{proof}
In either state, uninformed voters always vote for $A$, and informed voters vote for $A$ if they receive signal $a$. In state $\theta_A$, the latter make up a $q_h>1/2$ share, and so overall a majority of voters always votes for $A$.
In state $\theta_B$, voters for $A$ consist of the $\lambda$-fraction of uninformed voters and the $(1-\lambda)\cdot(1-q_h)$-fraction of informed voters who receive signal $b$. Thus, at least half of the voters vote for policy $A$ in state $\theta_B$ if and only if $\lambda+(1-\lambda)\cdot(1-q_h)\geq\frac{1}{2}\Leftrightarrow \lambda\geq1-\frac{1}{2q_h}\Leftrightarrow q_h\leq\frac{1}{2(1-\lambda)}.$
\end{proof}

It is straightforward to verify that $q_{NI}(\lambda)$ is increasing in $\lambda$, that $q_{NI}(25\%)=\frac{2}{3}$, and that $q_{NI}(50\%)=1$. This is illustrated by the solid (red) curve in Figure \ref{fig-partially-informed}.
In what follows we focus on the more interesting case in which $q_h$ is above this threshold---namely, the case in which policy $B$ is chosen in state $\theta_B$ unless the designer manipulates the outcome. 

In order to state our result, we define $\overline q:[0,1)\Rightarrow[0.5,1]$ as follows:
  $$\overline q (\lambda)= \frac{-\lambda + \sqrt{\lambda^2+2-2\lambda}}{2-2\lambda}.$$
 It is straightforward to verify the following, illustrated by the dashed (blue) line in Figure  \ref{fig-partially-informed}:
 \begin{fact}\label{fact-overline-q}
 $\overline q(\lambda)$ is decreasing in $\lambda$, $\overline q(0)=\frac{\sqrt2}{2}$, and $\overline q(25\%)=\frac{2}{3}=q_{NI}(25\%)$.
  \end{fact}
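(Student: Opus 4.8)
The three assertions are elementary, so the plan is to dispatch the two point-evaluations by direct substitution and handle the monotonicity claim by a short sign computation.

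For the evaluations: substituting $\lambda=0$ into the formula gives $\overline q(0)=\frac{0+\sqrt{2}}{2}=\frac{\sqrt2}{2}$. For $\lambda=\frac14$ the quantity under the radical is $\lambda^2+2-2\lambda=\frac{1}{16}+2-\frac12=\frac{25}{16}$, so $\overline q(\tfrac14)=\frac{-\frac14+\frac54}{2-\frac12}=\frac{1}{3/2}=\frac23$; and $q_{NI}(\tfrac14)=\frac{1}{2(1-1/4)}=\frac23$, as already recorded after Claim~\ref{claim-1}, so the two values coincide.

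For monotonicity I would write $\overline q(\lambda)=\frac{-\lambda+\sqrt{g(\lambda)}}{2(1-\lambda)}$ with $g(\lambda):=\lambda^2-2\lambda+2=(1-\lambda)^2+1$, which is strictly positive, so $\overline q$ is real and differentiable on all of $[0,1)$. Differentiating by the quotient rule and using $g'(\lambda)=2(\lambda-1)$, the sign of $\overline q'(\lambda)$ equals the sign of $(1-\lambda)\frac{d}{d\lambda}\!\left(-\lambda+\sqrt{g}\right)+\left(-\lambda+\sqrt{g}\right)$; expanding this and using the identity $g(\lambda)-(1-\lambda)^2=1$, it collapses to $\frac{1-\sqrt{g(\lambda)}}{\sqrt{g(\lambda)}}$, which is strictly negative for $\lambda\in[0,1)$ since $g(\lambda)>1$ there. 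Hence $\overline q$ is strictly decreasing. (An equivalent route: from the formula one checks that $q=\overline q(\lambda)$ satisfies $2(1-\lambda)q^2+2\lambda q=1$, i.e.\ $\lambda=\frac{1-2q^2}{2q(1-q)}$; differentiating in $q$ shows the numerator of $\lambda'(q)$ reduces to $-2(2q^2-2q+1)<0$, so $\lambda$ is strictly decreasing in $q$ on $(\tfrac12,\tfrac{\sqrt2}{2}]$, and therefore $\overline q$ is strictly decreasing in $\lambda$.)

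No step is a genuine obstacle. The only points needing a little care are verifying that the radicand is positive so that $\overline q$ is smooth on the whole interval $[0,1)$ (immediate from $g(\lambda)=(1-\lambda)^2+1$), and, if the alternative argument is used, noting that the squaring step is reversible because $2(1-\lambda)q+\lambda\ge 0$.
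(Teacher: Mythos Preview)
Your proof is correct. The paper itself does not give an explicit argument for Fact~\ref{fact-overline-q}, merely asserting that it is ``straightforward to verify,'' so there is nothing to compare against; your direct substitutions and the derivative computation (with the neat simplification via $g(\lambda)-(1-\lambda)^2=1$) fully justify the claim. The alternative implicit-function route you sketch is in fact the most natural one from the paper's perspective, since the defining equation $2(1-\lambda)q^2+2\lambda q=1$ is exactly the boundary condition $\lambda(1-q_h)+(1-\lambda)(1-q_h^2)=\tfrac12$ appearing in the proof of Proposition~\ref{prop:partially-uninformed}.
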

 
 Our first result shows that the designer can manipulate the election's outcome iff $q_h\leq \overline q(\lambda)$, and that manipulation (when possible) can be implemented by an unbiased signal. Moreover, for values of precision in the interval $(\frac{2}{3},\overline q (\lambda)$
 (values above the horizontal dotted (black) line in Figure \ref{fig-partially-informed}), all optimal signals have non-positive biases. 
 \begin{figure}
\includegraphics[scale=0.45]{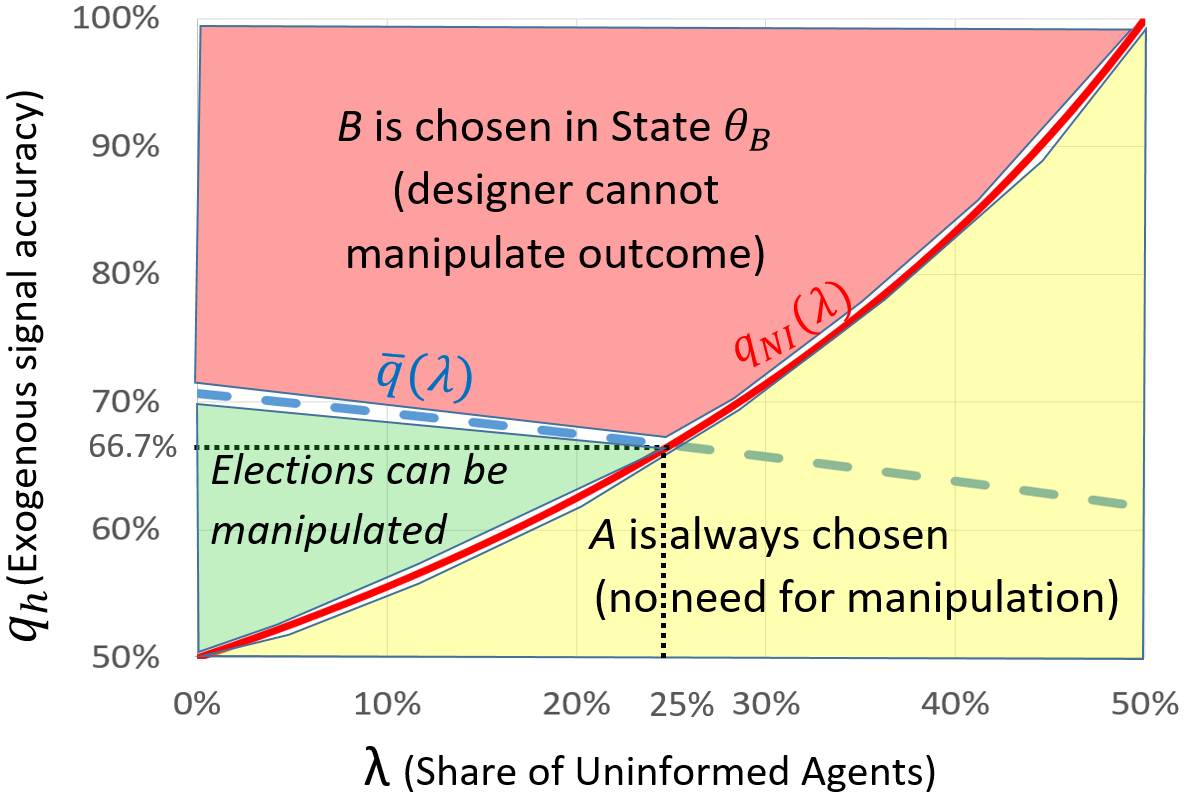}
\label{fig-partially-informed}
\caption{Illustration of Proposition \ref{prop:partially-uninformed} ($q_\ell=0.5)$}
\end{figure}

\begin{proposition}\label{prop:partially-uninformed} Suppose that $q_\ell=\frac{1}{2}$ and $q_h>q_{NI}(\lambda).$
\begin{enumerate} 
\item If $q_h>\overline q(\lambda)$ then the designer cannot manipulate the election's result.
\item If $q_h\leq\overline q(\lambda)$ then the designer can manipulate the election's outcome, and can do so by using the unbiased signal $(q_h,1-q_h)$.
\item If $q_h\in(\frac{2}{3},\overline q(\lambda)]$ then {\em all} optimal signals have non-positive biases.
\end{enumerate}
\end{proposition}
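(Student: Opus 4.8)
The plan is to prove all three parts together, using Lemma~\ref{lem:opt-signals-general}, the row data of Table~\ref{tab:6-signals}, and the bias formula~\eqref{eq:bias-alpha-beta}. The key reformulation is that, since $\alpha>\beta$ for any informative signal, \eqref{eq:bias-alpha-beta} makes $\bias(\alpha,\beta)\le 0$ equivalent to $\alpha+\beta\ge 1$, so ``non-positive bias'' in part~(3) is simply the inequality $\alpha'+\beta'\ge 1$. Moreover, since $q_h>q_{NI}(\lambda)$, an uninformative signal is not optimal (Claim~\ref{claim-1}), so every optimal signal is informative and therefore has $\beta'<\tfrac{1}{2}<\alpha'$.

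First I would reduce to two candidate signals. Feeding an arbitrary optimal signal $(\alpha',\beta')$ into Lemma~\ref{lem:opt-signals-general}, its reduction $(\alpha,\beta)$ --- with $\alpha=\max\{q\in\{q_\ell,q_h\}:q\le\alpha'\}$ and $\beta=\max\{q\in\{0,1-q_h,1-q_\ell\}:q\le\beta'\}$ --- is also optimal and satisfies $\alpha\le\alpha'$, $\beta\le\beta'$. Setting $q_\ell=\tfrac{1}{2}$ and using $\beta'<\tfrac{1}{2}<\alpha'$ (so $\beta\neq\tfrac{1}{2}$ and $\alpha\in\{\tfrac{1}{2},q_h\}$), the only possible reductions are $(\tfrac{1}{2},0)$, $(\tfrac{1}{2},1-q_h)$, $(q_h,0)$, and $(q_h,1-q_h)$. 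The first two have $p_\fa=\frac{\tfrac{1}{2}-\beta}{\alpha-\beta}=1$ (the designer always sends $\fa$), so they are uninformative and not optimal; hence the reduction of any optimal signal is $(q_h,0)$ or $(q_h,1-q_h)$. Finally, by Table~\ref{tab:6-signals} together with the optimality criterion (a signal is optimal iff the share of $B$-voters in state $\theta_B$ is at most $\tfrac{1}{2}$), $(q_h,0)$ is optimal iff $\frac{2q_h-1}{q_h}\le\tfrac{1}{2}$, i.e. iff $q_h\le\tfrac{2}{3}$, and $(q_h,1-q_h)$ is optimal iff $q_h^{2}+\lambda q_h(1-q_h)\le\tfrac{1}{2}$, which rearranges to $q_h\le\overline q(\lambda)$ (the nonnegative root of the corresponding quadratic equals $\overline q(\lambda)$ by definition).

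With this in hand, the three parts follow. For part~(1), if $q_h>\overline q(\lambda)$ then neither candidate is optimal: $(q_h,1-q_h)$ is not optimal since $q_h>\overline q(\lambda)$, and $(q_h,0)$ is not optimal because $q_h>\tfrac{2}{3}$, which holds under the standing hypothesis --- if $\lambda\le\tfrac{1}{4}$ then $q_h>\overline q(\lambda)\ge\overline q(\tfrac{1}{4})=\tfrac{2}{3}$, and if $\lambda>\tfrac{1}{4}$ then $q_h>q_{NI}(\lambda)>q_{NI}(\tfrac{1}{4})=\tfrac{2}{3}$, using Fact~\ref{fact-overline-q} and the monotonicity of $q_{NI}$. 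So by Lemma~\ref{lem:opt-signals-general} no signal at all is optimal, and the designer cannot manipulate. For part~(2), if $q_h\le\overline q(\lambda)$ then $(q_h,1-q_h)$ is optimal; it is symmetric, so $\bias(q_h,1-q_h)=0$ by~\eqref{eq:bias-alpha-beta}, and manipulation is achieved by an unbiased signal. For part~(3), if in addition $q_h>\tfrac{2}{3}$ then $(q_h,0)$ is not optimal, so the reduction of every optimal signal is $(q_h,1-q_h)$; hence $\alpha'\ge q_h$ and $\beta'\ge 1-q_h$, giving $\alpha'+\beta'\ge 1$ and thus $\bias(\alpha',\beta')\le 0$.

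I do not expect a serious obstacle, since Lemma~\ref{lem:opt-signals-general} and Table~\ref{tab:6-signals} carry the content. The step needing the most care is the bookkeeping of the specialization $q_\ell=\tfrac{1}{2}$: noticing that $1-q_\ell=\tfrac{1}{2}$ coincides with a boundary value of the $\beta$-set, that $\alpha=\tfrac{1}{2}$ (or $\beta=\tfrac{1}{2}$) collapses the designer's signal to the uninformative one, and that $q_h>\tfrac{1}{2}$ keeps the two surviving candidate signals distinct with $\alpha>\beta$. One must also track the endpoint inequalities --- strict at $q_h=\tfrac{2}{3}$, which is exactly why the interval in part~(3) is open on the left, and weak at $q_h=\overline q(\lambda)$ --- and verify the short interlacing fact $\overline q(\tfrac{1}{4})=q_{NI}(\tfrac{1}{4})=\tfrac{2}{3}$ used in part~(1).
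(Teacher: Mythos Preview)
Your proposal is correct and follows essentially the same route as the paper's proof: use Lemma~\ref{lem:opt-signals-general} to reduce to the candidate signals $(q_h,0)$ and $(q_h,1-q_h)$, read off their optimality thresholds from Table~\ref{tab:6-signals}, and conclude part~(3) via the bias formula~\eqref{eq:bias-alpha-beta}. Your treatment is in fact slightly more explicit than the paper's in two places---you spell out why the reductions with $\alpha=q_\ell=\tfrac{1}{2}$ collapse to uninformative signals, and in part~(1) you make precise the interlacing argument (splitting on $\lambda\lessgtr\tfrac{1}{4}$ and using $\overline q(\tfrac{1}{4})=q_{NI}(\tfrac{1}{4})=\tfrac{2}{3}$) that the paper compresses into ``Combining these observations with Fact~\ref{fact-overline-q}.''
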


An interesting feature of the optimal signals is that they have non-positive bias (bullets 2 and 3 of Proposition~\ref{prop:partially-uninformed}). Observe that this is the opposite direction of bias relative to the standard application of Bayesian persuasion \citep{kamenica2011bayesian}, in which the bias is positive (i.e., in favor of $A$, the designer-optimal outcome). The intuition underlying the contrast between the standard application and our setup is the following. A voter who is informed and obtains signal $a$ is  prepared to vote for $A$, whereas a voter who is informed and obtains signal $b$ is prepared to vote for $B$. The designer would like to convince the latter to also vote for $A$, while at the same time not causing the former to switch. The  way to do this is to choose a signal for which realization $\fa$ is stronger than realization $\fb$, so that voters with realizations $(b,\fa)$ switch to $A$ but voters with realizations $(a,\fb)$ do {\em not} switch to $B$. Of course, since $\fa$ is stronger than $\fb$, it must be the case that $\fb$ is realized with higher probability, or, in other words, that the signal has non-positive bias. We now prove Proposition~\ref{prop:partially-uninformed}.
\begin{proof}
Claim \ref{claim-1} and the inequality $q_h>q_{NI}(\lambda)$ imply that policy $A$ is not chosen in state $\theta_B$ if the signal is uninformative. Due to this we can focus on informative signals. Thus, Lemma \ref{lem:opt-signals-general} implies that if the designer can manipulate the election, then either $(q_h,0)$ or $(q_h,1-q_h)$ are optimal. Signal $(q_h,0)$ induces all voters to vote according to the designer's signal $s$ (and to ignore the original signal). The probability of signal $s$ being in favor of $A$ in state $\theta_B$ is $\Pr (s(\theta_B)=\fa)=\frac{1-2\beta-\alpha+2\alpha\beta}{\alpha-\beta}=\frac{1-q_h}{q_h},$ which is weakly larger than $50\%$ iff $q_h\leq\frac{2}{3}.$ 

Signal $(q_h,1-q_h)$ induces informed (resp., uninformed) agents to vote for $A$ if either of their two signals (resp., the designer's signal) is in favor of $A$, which has probability of $1-q_h^2$ (resp., $1-q_h$) in state $\theta_B$. Thus, the share of voters who vote for policy $A$ in state $\theta_B$ is at least  $50\%$ iff $\lambda\cdot(1-q_h)+(1-\lambda)\cdot(1-q_h^2)\geq50\% \Leftrightarrow q\leq \overline q(\lambda).$ Combining these observartions with Fact \ref{fact-overline-q} implies bullets (1) and (2) of the proposition. 

The argument for bullet (3) is the following. Due to Lemma \ref{lem:opt-signals-general}, if $\frac{2}{3}<q_h\leq \frac{\sqrt2}{2}$, then any optimal signal $(\alpha,\beta)$ must satisfy $\alpha\geq{q_h}$ and $\beta\geq{1-q_h}$.
In this case, the numerator in (\ref{eq:bias-alpha-beta}), namely, $1-(\alpha+\beta)$, is non-positive. Since the denominator is always positive, this implies that the bias is non-positive.
\end{proof}
An interesting special case is a homogeneous population in which all agents have the same signal quality (i.e., $\lambda=0$). Applying Proposition \ref{prop:partially-uninformed} yields the following characterization of election manipulation in homogeneous populations.
\begin{corollary}[Homogeneous populations]\label{cor-homogenous}
Suppose that $\lambda=0$.
\begin{enumerate} 
\item If $q_h>\frac{\sqrt2}{2}$ then the designer cannot manipulate the election's result.
\item If $q_h\leq\frac{\sqrt2}{2}$ then the designer can manipulate the election's outcome, and can do so by using the unbiased signal $(q_h,1-q_h)$.
\item If $q_h\in(\frac{2}{3},\frac{\sqrt2}{2}]$ then {\em all} optimal signals have non-positive biases.
\end{enumerate}
\end{corollary}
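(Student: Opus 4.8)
The plan is to obtain Corollary~\ref{cor-homogenous} as the direct specialization of Proposition~\ref{prop:partially-uninformed} to $\lambda=0$, after first disposing of the hypothesis $q_h>q_{NI}(\lambda)$ that appears in the proposition but not in the corollary. First I would note that $q_{NI}(0)=\frac{1}{2\cdot(1-0)}=\frac{1}{2}$, so for a homogeneous population with any informative signal ($q_h>\frac12$) the hypothesis $q_h>q_{NI}(\lambda)$ is automatically satisfied, and in the degenerate case $q_h=\frac12$ every voter is uninformed, so $A$ is chosen in both states by the tie-breaking rule and the designer trivially ``manipulates'' (or rather, need not); this edge case is consistent with each bullet as stated and can be mentioned in one line. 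So the substantive content is just Proposition~\ref{prop:partially-uninformed} evaluated at $\lambda=0$.

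Next I would compute the relevant quantities at $\lambda=0$. By Fact~\ref{fact-overline-q}, $\overline q(0)=\frac{\sqrt2}{2}$, which immediately turns bullet~(1) of the proposition (``$q_h>\overline q(\lambda)$ implies no manipulation'') into bullet~(1) of the corollary, and bullet~(2) (``$q_h\le\overline q(\lambda)$ implies manipulation via the unbiased signal $(q_h,1-q_h)$'') into bullet~(2) of the corollary. For bullet~(3), the interval $(\frac23,\overline q(\lambda)]$ becomes $(\frac23,\frac{\sqrt2}{2}]$, and the proposition's conclusion that all optimal signals have non-positive bias carries over verbatim. Thus each of the three bullets of the corollary is literally the $\lambda=0$ instance of the corresponding bullet of the proposition.

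I should double-check that the interval $(\frac23,\frac{\sqrt2}{2}]$ is nonempty, i.e.\ that $\frac23<\frac{\sqrt2}{2}$: squaring, $\frac49<\frac12$, which holds, so the interval is a genuine (nonempty) sub-interval of $(\frac12,\frac{\sqrt2}{2}]$, and bullet~(3) is not vacuous. I would also remark, for intuition, that in the homogeneous case the two candidate $2\times3$ optimal signals from Lemma~\ref{lem:opt-signals-general} with $\alpha=q_h$ reduce to $(q_h,0)$ and $(q_h,1-q_h)$, and the proof of the proposition shows $(q_h,0)$ works only when $q_h\le\frac23$ while the unbiased $(q_h,1-q_h)$ works up to $\overline q(0)=\frac{\sqrt2}{2}$ — which is exactly why, in the range $(\frac23,\frac{\sqrt2}{2}]$, only the non-positively-biased signal remains available.

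There is essentially no obstacle here: the corollary is a corollary in the strict sense, and the only things to be careful about are (i) noting that the proposition's extra hypothesis $q_h>q_{NI}(\lambda)$ is free when $\lambda=0$ (so nothing is lost by dropping it from the statement), (ii) correctly reading off $\overline q(0)=\frac{\sqrt2}{2}$ from Fact~\ref{fact-overline-q}, and (iii) handling the trivial boundary $q_h=\frac12$. The proof can therefore be written in a few sentences: ``This is Proposition~\ref{prop:partially-uninformed} with $\lambda=0$. Indeed $q_{NI}(0)=\frac12$, so the hypothesis $q_h>q_{NI}(\lambda)$ holds whenever the signal is informative (and when $q_h=\frac12$ all voters are uninformed, so $A$ is chosen in both states and every bullet holds trivially). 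By Fact~\ref{fact-overline-q}, $\overline q(0)=\frac{\sqrt2}{2}$, and substituting $\lambda=0$ into the three bullets of Proposition~\ref{prop:partially-uninformed} gives the three bullets here.''
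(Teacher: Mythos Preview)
Your proposal is correct and matches the paper's approach: the paper presents this corollary with no separate proof, simply stating that ``Applying Proposition~\ref{prop:partially-uninformed} yields the following characterization,'' i.e., it is the $\lambda=0$ specialization of the proposition. Your additional care in checking that $q_{NI}(0)=\tfrac12$ (so the proposition's hypothesis is automatic for any informative signal) and in reading off $\overline q(0)=\tfrac{\sqrt2}{2}$ from Fact~\ref{fact-overline-q} is exactly what is needed to make the specialization rigorous.
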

\subsection{General Binary Signals $\left(\frac{1}{2}<q_\ell<q_h,\,\,\lambda\in[0,1]\right)$}\label{sec:most-general-case}
In this section we return to the general binary model in which the low-accuracy signal is informative ($q_\ell>1/2$). We prove two results: first, a characterization of conditions under which the designer can manipulate the election; and second, an identification of conditions under which optimal signals have positive or negative biases.

In order to state our first result we define $\underline\lambda:(\frac{1}{2},\frac{\sqrt2}{2})]\rightarrow[0,1]$ as follows:
$\underline\lambda(q_h)=\frac{0.5-q_{h}^{2}}{q_{h}\left(1-q_{h}\right)}.$ We will show in the proof of Proposition \ref{pro-general-binary-signals} below that $\underline\lambda(q_h)$ is the highest share of $\lambda$ for which the signal $(q_h,1-q_h)$ induces a majority of the voters to vote for $A$ in state $\theta_B$ for all values of $q_\ell\leq q_h$.
Observe that:
\begin{fact}\label{fact-lambda-under}
The function $\underline\lambda(q_h)$ is decreasing, $\underline\lambda(\frac{2}{3})=25\%$, and $\underline\lambda(\frac{\sqrt2}{2})=0$.
\end{fact}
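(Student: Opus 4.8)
The plan is to verify the three assertions directly, treating $\underline\lambda$ as a real function of the single variable $q_h$ on the interval $(\frac12,\frac{\sqrt2}{2}]$, using nothing beyond elementary algebra and a one-variable derivative.

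First I would dispatch the two boundary values by substitution. At $q_h=\frac23$ the numerator is $\frac12-\frac49=\frac1{18}$ and the denominator is $\frac23\cdot\frac13=\frac29$, so $\underline\lambda(\frac23)=\frac{1/18}{2/9}=\frac14=25\%$. At $q_h=\frac{\sqrt2}{2}$ the numerator is $\frac12-(\frac{\sqrt2}{2})^2=\frac12-\frac12=0$, while the denominator $\frac{\sqrt2}{2}(1-\frac{\sqrt2}{2})$ is strictly positive, so $\underline\lambda(\frac{\sqrt2}{2})=0$.

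For monotonicity I would write $\underline\lambda(q)=\frac{1/2-q^2}{q-q^2}$ and apply the quotient rule. The denominator of the derivative is $(q-q^2)^2$, which is positive on $(0,1)$, and the numerator is $-2q(q-q^2)-(\frac12-q^2)(1-2q)$; the key point is that upon expanding, the cubic terms cancel and this collapses to $-(q^2-q+\frac12)$. Since $q^2-q+\frac12=(q-\frac12)^2+\frac14>0$ for every real $q$, the derivative is strictly negative on all of $(0,1)$, hence in particular on $(\frac12,\frac{\sqrt2}{2})$, so $\underline\lambda$ is strictly decreasing there. (One can also avoid calculus via the identity $\frac12-q^2=(\frac12-q)+(q-q^2)$, which gives $\underline\lambda(q)=1-\frac{q-1/2}{q(1-q)}$; on $(\frac12,1)$ the numerator $q-\frac12$ increases and the denominator $q(1-q)$ decreases, so the subtracted fraction increases and $\underline\lambda$ decreases.)

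I do not anticipate any genuine obstacle: the only place warranting care is the algebraic simplification of the derivative's numerator, where the $q^3$ terms must cancel exactly, so that computation is the single step I would double-check.
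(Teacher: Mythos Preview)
Your proof is correct. The paper does not supply any argument for this fact---it simply states it as an observation after giving the definition of $\underline\lambda$---so your direct verification (substitution for the two boundary values, quotient-rule derivative for monotonicity, with the alternative calculus-free rewriting) is exactly the kind of routine check the authors are implicitly leaving to the reader.
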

\begin{proposition}\label{pro-general-binary-signals}~
\begin{enumerate}
    \item If $\frac{\sqrt2}{2}<q_\ell$, then the designer cannot manipulate the election.
    \item If $q_\ell<\frac{\sqrt2}{2}<q_h$, then the designer can manipulate the election iff $\lambda$ is sufficiently large.
    \item If $\frac{2}{3}<q_h\leq\frac{\sqrt2}{2}$, then the designer can manipulate the elections iff either (1) $\lambda\leq\underline\lambda(q_h)$, or (2) $\lambda$ is sufficiently large.
    \item If $q_h\leq\frac{2}{3}$, then the designer can manipulate the elections for any $q_\ell$ and  $\lambda$.
    
\end{enumerate}
\end{proposition}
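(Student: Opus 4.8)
The plan is to invoke Lemma~\ref{lem:opt-signals-general}, which reduces the question to the six $2\times 3$ signals of Table~\ref{tab:6-signals}: the designer can manipulate the election if and only if at least one of them is optimal, i.e., induces at most half the voters to vote for $B$ in state $\theta_B$. Write $f(\alpha,\beta;\lambda)$ for the ``share of $B$-voters in $\theta_B$'' entry of Table~\ref{tab:6-signals}, so that manipulation is possible iff $\min f(\alpha,\beta;\lambda)\le\frac12$ over the six rows. The structural facts I would read off the last two columns are: every $f$ is affine in $\lambda$; $f(q_h,0;\cdot)=\frac{2q_h-1}{q_h}$ is constant and $\le\frac12$ exactly when $q_h\le\frac23$; $f(q_h,1-q_h;\lambda)=q_h^2+\lambda q_h(1-q_h)$ is increasing with $f(q_h,1-q_h;0)=q_h^2$, and $\{\lambda\in[0,1]:f(q_h,1-q_h;\lambda)\le\frac12\}$ equals $[0,\underline\lambda(q_h)]$ when $q_h\le\frac{\sqrt2}{2}$ and is empty otherwise; and the other four signals are decreasing in $\lambda$. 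Hence the union, over the four decreasing signals, of the $\lambda$'s at which one of them is optimal is an up-set $[\bar\lambda,1]$ (possibly empty, possibly $[0,1]$), so the manipulable set is automatically of the form $[0,\underline\lambda(q_h)]\cup[\bar\lambda,1]$ --- exactly the ``iff'' shape claimed in the four bullets. What remains is to locate the endpoints in each regime.

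Bullet~4 ($q_h\le\frac23$) is immediate: $(q_h,0)$ then has $f\le\frac12$ for every $\lambda$ and $q_\ell$. For bullet~1 ($q_\ell>\frac{\sqrt2}{2}$, hence $q_h\ge q_\ell>\frac{\sqrt2}{2}>\frac23$) I would show every signal fails for every $\lambda$. Since the four decreasing signals are minimized at $\lambda=1$ and $(q_h,1-q_h)$ at $\lambda=0$, this reduces to six inequalities: $\frac{2q_h-1}{q_h}>\frac12$ and $q_h^2>\frac12$ (from $q_h>\frac{\sqrt2}{2}>\frac23$); $f(q_\ell,0;1)=\frac{2q_\ell-1}{q_\ell}>\frac12$ and $f(q_\ell,1-q_\ell;1)=q_\ell^2>\frac12$ (from $q_\ell>\frac{\sqrt2}{2}>\frac23$); and $f(q_\ell,1-q_h;1)=\frac{q_h(2q_\ell-1)}{q_h+q_\ell-1}>\frac12$, $f(q_h,1-q_\ell;1)=\frac{q_\ell^2(2q_h-1)}{q_h+q_\ell-1}>\frac12$, each of which, after clearing the positive denominator, becomes an expression affine in $q_h$ that is positive at the relevant endpoints of $[q_\ell,1]$ --- yielding, e.g., $(2q_\ell-1)^2$, $3q_\ell-2$, or $(2q_\ell-1)(2q_\ell^2-1)$, all positive for $q_\ell>\frac{\sqrt2}{2}$. (Equivalently, at $\lambda=1$ the four decreasing signals collapse to the homogeneous-$q_\ell$ problem and at $\lambda=0$ the signal $(q_h,1-q_h)$ to the homogeneous-$q_h$ problem, both ruled out by Corollary~\ref{cor-homogenous}(1).) Thus no signal is ever optimal.

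For bullets~2 and~3, note first that $q_h>\frac23$, so $(q_h,0)$ is never optimal. If $q_\ell<\frac{\sqrt2}{2}<q_h$ then also $f(q_h,1-q_h;0)=q_h^2>\frac12$, so $(q_h,1-q_h)$ is never optimal, and the manipulable set is the up-set $[\bar\lambda,1]$ coming from the four decreasing signals; it is proper because $f(q_\ell,1-q_\ell;1)=q_\ell^2<\frac12$ gives $\bar\lambda<1$, while at $\lambda=0$ the population is homogeneous with accuracy $q_h>\frac{\sqrt2}{2}$, so Corollary~\ref{cor-homogenous}(1) (equivalently, a one-line check that each $f(\cdot,\cdot;0)>\frac12$) gives $\bar\lambda>0$ --- i.e., ``manipulable iff $\lambda$ sufficiently large.'' If instead $\frac23<q_h\le\frac{\sqrt2}{2}$, then $f(q_h,1-q_h;0)=q_h^2\le\frac12$, so the success set of $(q_h,1-q_h)$ is exactly $[0,\underline\lambda(q_h)]$ with $\underline\lambda(q_h)\in[0,1]$ ($\frac12-q_h^2\ge0$ since $q_h\le\frac{\sqrt2}{2}$, and $\underline\lambda(q_h)\le1$ since $q_h\ge\frac12$); combined with the up-set $[\bar\lambda,1]$ from the four decreasing signals, which is nonempty since $f(q_\ell,1-q_\ell;1)=q_\ell^2<q_h^2\le\frac12$, the manipulable set is $[0,\underline\lambda(q_h)]\cup[\bar\lambda,1]$ --- i.e., condition~(1) or~(2) of the bullet, with a genuine gap precisely when $\bar\lambda>\underline\lambda(q_h)$.

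The main obstacle is not any single step but the bookkeeping: one must be sure that $(q_h,1-q_h)$ is the \emph{only} signal that can contribute a left-hand interval (so the left endpoint is exactly $\underline\lambda(q_h)$ and not something larger) and that $(q_h,0)$ drops out as soon as $q_h>\frac23$; one also needs the elementary identification of $\underline\lambda(q_h)$ as the root of $f(q_h,1-q_h;\cdot)=\frac12$. The only honest computations are the quadratic inequalities in bullet~1, which require a small sign-case on $4q_\ell-3$. By contrast, the conceptual crux --- that the manipulable set is automatically a down-set together with an up-set, so the stated ``iff'' holds without needing the gap to be nondegenerate --- comes essentially for free from the monotonicities recorded in Table~\ref{tab:6-signals}.
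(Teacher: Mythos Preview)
Your proof is correct and follows essentially the same route as the paper: reduce via Lemma~\ref{lem:opt-signals-general} to the six signals of Table~\ref{tab:6-signals}, read off their monotonicity in $\lambda$, and locate the thresholds in each regime. Your organizing observation---that the manipulable set in $\lambda$ is automatically the down-set $[0,\underline\lambda(q_h)]$ coming from $(q_h,1-q_h)$ together with an up-set coming from the four $\lambda$-decreasing signals---is a tidy framing the paper leaves implicit; conversely, for bullet~1 the paper shortcuts your six-signal endpoint computation by invoking Corollary~\ref{cor-homogenous} per accuracy level (a route you also note parenthetically).
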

\begin{proof}~
\begin{enumerate}
    \item Corollary \ref{cor-homogenous} implies that for each $q>\frac{\sqrt2}{2}$ all signals induce less than $50\%$ of agents with signal accuracy $q$ to vote for policy $A$ in state $\theta_B.$ This implies that if $q_\ell>\frac{\sqrt2}{2},$ then a majority of the voters support policy $B$ in  state $\theta_B$ (i.e., the designer cannot manipulate the elections).
    \item This is implied by the observation that when $\frac{\sqrt2}{2}<q_h$ both signals $(q_h,0)$ and $(q_h,1-q_h)$ cannot manipulate the elections, while the share of agents who support policy $A$ in state $\theta_B$ is increasing in $\lambda$ (and is greater than than 50\% for sufficiently high $\lambda$) for each of the other optimal signals of Lemma \ref{lem:opt-signals-general} (see Table \ref{tab:6-signals}). Signal $(q_h,0)$  cannot manipulate the elections because the induced share of $B$ supporters in state $\theta_B$ is $\frac{2q_h-1}{q_h},$  which is larger than $50\%$ iff $q_h>\frac{\sqrt2}{2}$.  Signal $(q_h,1-q_h)$ induces a share $q_h^2+\lambda q_h(1-q_h)$ of the voters  to vote for policy $B$ in state $\theta_B$ (see Table \ref{tab:6-signals}). This share is smaller than $50\%$ iff $\lambda\leq\underline\lambda(q_h)$ because $q_h^2+\lambda q_h(1-q_h)\leq0.5 \Leftrightarrow \lambda \leq \frac{0.5-q_h^2}{q_h(1-q_h)}\equiv \underline\lambda(q_h).$ As observed in Fact \ref{fact-lambda-under}, $\underline\lambda(\frac{\sqrt2}{2})=0,$ which implies that signal $(q_h,1-q_h)$ cannot manipulate the elections if $q_h>\frac{\sqrt2}{2}.$ 
    \item Part (1) is implied by the observation that the designer's signal $\left(q_h,1-q_h\right)$ induces a share $q_h^2+\lambda q_h(1-q_h)$ to vote for policy $B$ in state $\theta_B$, and that this share is less than $50\%$ if $\lambda\leq\underline\lambda(q_h).$ Part (2) is implied by the fact that signal $(q_h,0)$ (resp., $(q_h,1-q_h)$) does not manipulate the elections when $q_h>\frac{2}{3}$ (resp., $\lambda>\underline\lambda(q_h)$), and that the share of supporters of policy $A$ in state $\theta_B$ is increasing in $\lambda$ for each of the other four signals of Lemma \ref{lem:opt-signals-general}.
    \item Observe (Table \ref{tab:6-signals}) that the designer's signal $\left(q_h,0\right)$ induces a share of $\frac{2q_h-1}{q_h}$ of the agents to vote for $B$ in state $\theta_B$ (for all values of $q_\ell\leq q_h$), and that this share is smaller than $50\%$ iff $q_h\leq \frac{2}{3}.$ \qedhere
\end{enumerate}
\end{proof}

Our results are illustrated in Figure \ref {fig:q_l-lambda}. Bullet (3) of Proposition \ref{pro-general-binary-signals} implies an interesting non-monotonicity in $\lambda$: When $q_h$ is in the interval $\left(\frac{2}{3},\frac{\sqrt2}{2}\right)$, then the elections are manipulable if either the share of agents with low accuracy is sufficiently small (namely, $\lambda\leq\underline\lambda$) or if it is sufficiently large, but the elections cannot be manipulated if this share is intermediate. The intuition for this non-monotonicity is as follows. Our designer is constrained by the fact that he must choose the same signal distribution to all agents. When $\lambda$ is either very low or very high, the designer can tune this distribution to the large majority of agents by sending an unbiased signal with the ``right'' level of accuracy (i.e., a level that induces most voters to vote $A$ if either of their signals is in favor of $A$). By contrast, when $\lambda$ is close to $50\%$, the designer has a trade-off, and must send a ``versatile'' signal that can handle both groups of agents (and is not optimally ``tailored'' to either group), and this limits his ability to manipulate the elections.

Moreover, one can show that the ability to manipulate the elections is also non-monotonic in $q_\ell$. Specifically, for each $q_h$, the interval of $\lambda$-s for which the elections cannot be manipulated is non-monotone in $q_\ell$: the interval is small (and it might even disappear) when $q_\ell$ is either small (close to $50\%$) or large (close to $q_h$), while it is largest when $q_\ell$ is intermediate. The intuition for the non-monotonicity in $q_\ell$ is as follows. When $q_\ell$ is low, it is easy to manipulate the low-accuracy agents. The designer can send a positively biased signal that with high probability sends a ``weak-pro-$A$'' realization, which manipulates most low-accuracy agents (and a sufficient number of high-accuracy agents). When $q_\ell$ is close to $q_h$, the designer can exploit the relative homogeneity of the population by sending the negatively biased signal $(q_h,1-q_\ell)$, which induces both types of agents to vote for $A$ if either of their signals is ``pro-$A$''. In contrast, when $q_\ell$ has an intermediate value the designer has neither of these two benefits, and thus his ability to manipulate the elections is limited.

\begin{figure}[h]
\begin{centering}
\includegraphics[scale=0.44]{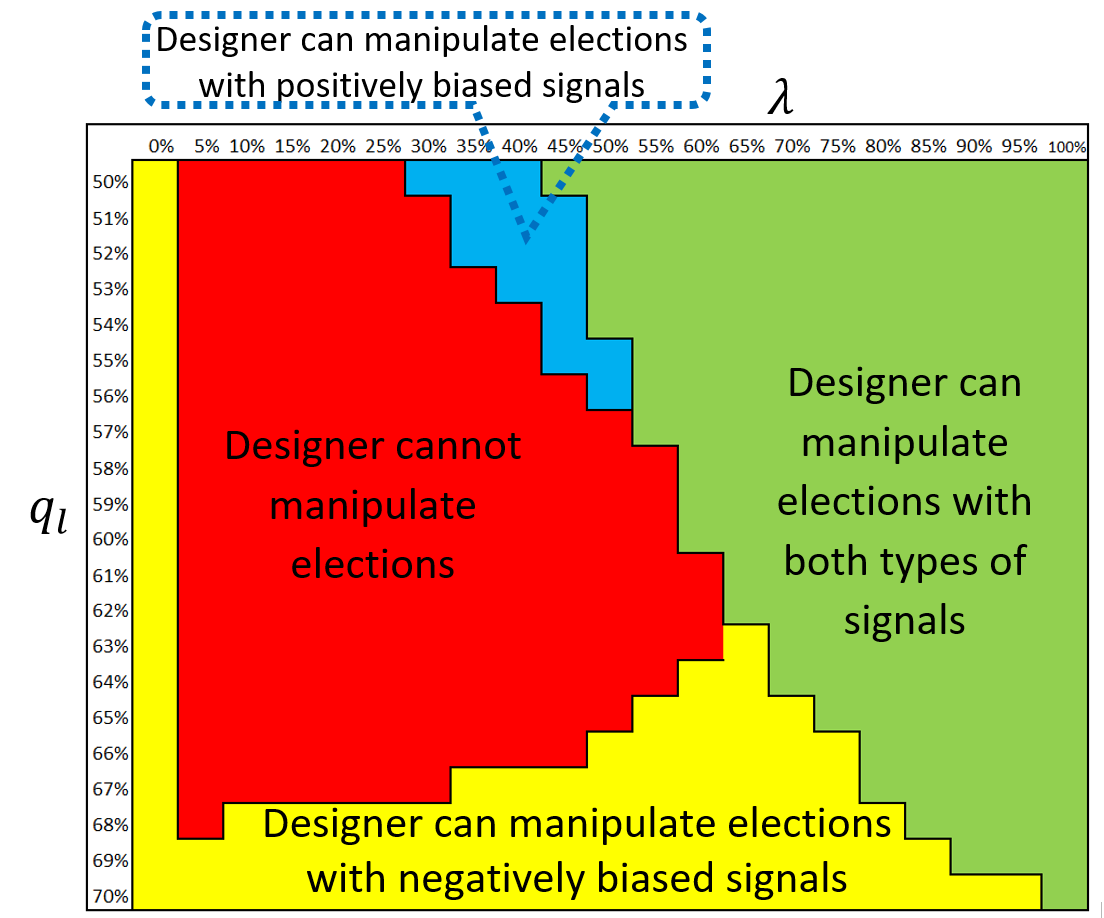}
\par\end{centering}
\caption{Outcome Manipulation for $q_{h}=70\%$ and Various Values of $q_{\ell}$ (values between $50\%$ and $70\%$ in steps of $1\%$) and $\lambda$ (values between 0 and $100\%$ in steps of $5\%$)}
\label{fig:q_l-lambda}
\end{figure}

Recall that in the special case of $q_\ell=0.5$ (partially informed agents), manipulation, when possible, could always be implemented by an unbiased signal (and in some cases only by signals with non-positive biases). In what follows, we show that this is no longer the case in the general setup (with $q_\ell>0.5$). Specifically, we characterize both setups in which manipulation can only be implemented by positively biased signals (in which the signal $\mathfrak{a}$ is sent more often than $\mathfrak{b}$), and setups in which manipulation can only be implemented by negatively biased signals. Specifically, Proposition \ref{pro-direction-bias} below shows that the direction of the bias is uniquely determined (for an interior interval of $\lambda$-s) in the following setups:
\begin{enumerate}
    \item All optimal signals must have a  positive bias if $q_h>\frac{2}{3}$ and $q_\ell$ is sufficiently low.  
    \item All optimal signal must have a  negative bias if $q_h\in(\frac{2}{3},\frac{\sqrt2}{2})$ and $q_\ell$ is sufficiently high.
\end{enumerate}
Observe that in the second case, the manipulation is implemented by signals with the opposite direction of bias relative to the standard applications of Bayesian persuasion (see, e.g., \citealp{kamenica2019bayesian}). The intuition for this is that the optimal negatively biased signal $(q_h,1-q_\ell)$ exploits the (modest) heterogeneity of the population: the relatively infrequent signal $\mathfrak{a}$ is sufficiently informative to induce both types of agents to vote $A$ regardless of their exogenous signal, while the more frequently used signal $\mathfrak{b}$ is sufficiently weakly-informative, such that it does not change the voting of any agent who got exogenous signal in favor of policy $A$.
\begin{proposition}\label{pro-direction-bias}
~
\begin{enumerate}
    \item For any $q_h>\frac{2}{3}$ there is a $\overline q\in(0.5,q_h)$ such that, if $q_\ell\in(0.5, \overline q)$, then there is an interior open  interval of $\lambda$-s for which all optimal signals have positive biases.
    \item  For any $q_h\in\left(\frac{2}{3},\frac{\sqrt2}{2}\right)$ there is a  $\underline q\in(0.5,q_h)$, such that, if $q_\ell\in(\underline q,q_h)$, then there is an interior open  interval of $\lambda$-s for which all optimal signals have negative biases.
\end{enumerate}
\end{proposition}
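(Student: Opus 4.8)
The plan is to reduce the whole statement to the six $2\times3$ signals of Table~\ref{tab:6-signals} via Lemma~\ref{lem:opt-signals-general}. By \eqref{eq:bias-alpha-beta} the denominator $\alpha-\beta$ of $\bias(\alpha,\beta)$ is positive for every informative signal, so the sign of the bias is the sign of $1-\alpha-\beta$; hence among the grid signals $(q_\ell,0),(q_h,0),(q_\ell,1-q_h)$ have positive bias, $(q_\ell,1-q_\ell)$ and $(q_h,1-q_h)$ have zero bias, and $(q_h,1-q_\ell)$ has negative bias. Since a signal is optimal iff the share of $B$-voters in $\theta_B$ is at most $\tfrac12$, for each part it suffices to find a nonempty interior interval of $\lambda$'s on which (a) no grid signal of the ``wrong'' sign is optimal; (b) some grid signal of the claimed sign is optimal, so the election is manipulable; and (c) the conclusion extends from grid signals to all signals. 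I would treat part~2 first, since (c) is free there.

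\emph{Part 2.} Fix $q_h\in(\tfrac23,\tfrac{\sqrt2}{2})$ and let $q_\ell\uparrow q_h$. Sending $q_\ell\to q_h$ in the last column of Table~\ref{tab:6-signals}, the share of $B$-voters in $\theta_B$ tends to $\tfrac{2q_h-1}{q_h}>\tfrac12$ for $(q_\ell,0)$ and $(q_h,0)$; to $q_h>\tfrac12$ for $(q_\ell,1-q_h)$; to $q_h-\lambda q_h(1-q_h)$ for $(q_\ell,1-q_\ell)$ and to $q_h^2+\lambda q_h(1-q_h)$ for $(q_h,1-q_h)$; and to $q_h^2<\tfrac12$ for $(q_h,1-q_\ell)$. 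The two middle limits exceed $\tfrac12$ exactly when $\underline\lambda(q_h)<\lambda<\tfrac{2q_h-1}{2q_h(1-q_h)}$, and this interval is nonempty because $q_h^2+q_h>1$ for $q_h>\tfrac23$. Hence for $q_\ell$ close enough to $q_h$ and $\lambda$ in a small interior subinterval of $(\underline\lambda(q_h),\tfrac{2q_h-1}{2q_h(1-q_h)})$, the only optimal grid signal is $(q_h,1-q_\ell)$, which has negative bias. By Lemma~\ref{lem:opt-signals-general} any optimal signal $(\alpha',\beta')$ rounds down to an optimal grid signal, hence to $(q_h,1-q_\ell)$; so $\alpha'\ge q_h$, $\beta'\ge 1-q_\ell$, and $\alpha'+\beta'\ge 1+(q_h-q_\ell)>1$, i.e.\ $(\alpha',\beta')$ has negative bias. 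This proves part~2, for $q_\ell$ in some interval $(\underline q,q_h)$.

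\emph{Part 1.} Fix $q_h>\tfrac23$ and let $q_\ell\downarrow\tfrac12$. Sending $q_\ell\to\tfrac12$ in Table~\ref{tab:6-signals}: $(q_h,0)$ has share $\tfrac{2q_h-1}{q_h}>\tfrac12$ and is never optimal; $(q_\ell,1-q_\ell)$ and $(q_h,1-q_h)$ have shares tending to $q_h-\lambda(q_h-\tfrac14)$ and $q_h^2+\lambda q_h(1-q_h)$, each $>\tfrac12$ when $\lambda<\tfrac{4q_h-2}{4q_h-1}$ and $\lambda>\underline\lambda(q_h)$ respectively (the second condition vacuous if $q_h\ge\tfrac{\sqrt2}2$); $(q_h,1-q_\ell)$ has share tending to $q_h-\lambda(q_h-\tfrac12)>\tfrac12$ for $\lambda<1$, so is not optimal for $q_\ell$ near $\tfrac12$; and $(q_\ell,0)$ and $(q_\ell,1-q_h)$ both have share tending to $(1-\lambda)q_h\le\tfrac12$ once $\lambda\ge\tfrac{2q_h-1}{2q_h}$. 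Since $\underline\lambda(q_h)<\tfrac{2q_h-1}{2q_h}<\tfrac{4q_h-2}{4q_h-1}$ for $q_h>\tfrac23$ (a routine check, consistent with Fact~\ref{fact-lambda-under}), intersecting these conditions yields a nonempty interior $\lambda$-interval on which the only optimal grid signals lie in $\{(q_\ell,0),(q_\ell,1-q_h)\}$ and at least one of them is optimal.

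\emph{Step (c) in part 1 --- the main obstacle.} A non-grid signal rounding down to $(q_\ell,1-q_h)$ lies in the cell $C=[q_\ell,q_h)\times[1-q_h,1-q_\ell)$, where $\alpha'+\beta'$ may exceed $1$, so the rounding argument alone does not yield positive bias. Here I would use the monotonicity behind Lemma~\ref{lem:opt-signals-general}: within any grid cell the voters' responses are constant, so the share of $B$-voters in $\theta_B$ is an explicit, coordinatewise-nondecreasing function of $(\alpha',\beta')$; on $C$ it equals $\lambda\cdot\tfrac{(1-\beta')(2\alpha'-1)}{\alpha'-\beta'}+(1-\lambda)q_h$, which on the line $\alpha'+\beta'=1$ simplifies to $\lambda\alpha'+(1-\lambda)q_h\ge\lambda q_\ell+(1-\lambda)q_h$. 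Imposing the extra constraint $\lambda<\tfrac{2q_h-1}{2(q_h-q_\ell)}$ --- compatible with the constraints above, since $\tfrac{q_h(2q_\ell-1)}{q_h+q_\ell-1}<q_\ell$ (equivalently $q_h>q_\ell$) keeps the $\lambda$-window nonempty and this bound tends to $1$ as $q_\ell\to\tfrac12$ --- then forces every signal in $C$ with $\alpha'+\beta'\ge1$ to have share of $B$-voters $>\tfrac12$, hence to be non-optimal. Every other grid cell is either contained in $\{\alpha'+\beta'<1\}$ (so all its signals already have positive bias), or has its lower-left corner among the grid signals already shown non-optimal, so by the same coordinatewise monotonicity (exactly as for $(q_h,0)$ in the proof of Proposition~\ref{prop:partially-uninformed}) the whole cell is non-optimal. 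Hence on the resulting interior $\lambda$-interval every optimal signal satisfies $\alpha'+\beta'<1$, i.e.\ has positive bias, and this completes part~1, for $q_\ell$ in some interval $(\tfrac12,\overline q)$.
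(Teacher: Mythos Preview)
Your argument is correct and follows essentially the same route as the paper's proof: reduce via Lemma~\ref{lem:opt-signals-general} to the six grid signals, send $q_\ell\to q_h$ for part~2 and $q_\ell\to\tfrac12$ for part~1, and in part~1 handle the cell $C=[q_\ell,q_h)\times[1-q_h,1-q_\ell)$ by a direct estimate (the paper bounds the $\theta_B$-share of $B$-voters on $C$ by $\lambda q_h'+(1-\lambda)q_h\ge\tfrac{\lambda}{2}+(1-\lambda)q_h$ directly, whereas you evaluate on the diagonal $\alpha'+\beta'=1$ and then invoke coordinatewise monotonicity, but the two computations reach the same conclusion). One harmless slip: in part~2 the $q_\ell\to q_h$ limit of the $B$-share for the grid signal $(q_\ell,0)$ is $1-\tfrac{(1-q_h)(1-(1-\lambda)q_h)}{q_h}$ (which depends on $\lambda$), not $\tfrac{2q_h-1}{q_h}$ as you write; this quantity is still strictly above $\tfrac12$ for every $q_h>\tfrac23$ and every $\lambda\in[0,1]$, so nothing downstream changes.
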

\begin{proof}  Fix $q_h>\frac{2}{3}.$ Proposition \ref{pro-general-binary-signals} and Fact \ref{fact-lambda-under} imply that the signals $(q_h,0)$ and $(q_h,1-q_h)$ are not optimal if $\lambda>0.25-\epsilon$ for a sufficiently small $\epsilon>0.$ 
\begin{enumerate}
\item The limit of the share of $B$ voters in state $\theta_B$ when $q_\ell$ converges to 0.5 (see Table \ref{tab:6-signals}) is equal to $q_h(1-\lambda)$ for the positively biased signals $(q_\ell,0)$ and $(q_\ell,1-q_h)$, while it is strictly higher for the remaining two signals: it is equal to $q_h(1-\lambda)+\frac{1}{4}$ for signal $(q_\ell,1-q_\ell)$, and it is equal to $q_h(1-\lambda)+\frac{\lambda}{2}$ for signal $(q_h,1-q_\ell)$. This implies that for sufficiently small $q_\ell$-s, if $\lambda$ is slightly above $1-\frac{1}{2q_h}$, then only the positively biased signals $(q_\ell,0)$ and $(q_\ell,1-q_h)$ are optimal in the set of six signals of Lemma \ref{lem:opt-signals-general}. By Lemma \ref{lem:opt-signals-general}, this implies that the only potentially optimal signals are $(\alpha',\beta')\in[q_\ell,q_h)\times [0,1-q_h)$ or $(\alpha',\beta')\in[q_\ell,q_h)\times [1-q_h,1-q_\ell)$. Note that the former are all positively biased. The latter, however, may be either positively or negatively biased.

We now show that only the positively biased signals $(\alpha',\beta')\in[q_\ell,q_h)\times [1-q_h,1-q_\ell)$ are potentially optimal. Fix some $(\alpha',\beta')\in[q_\ell,q_h)\times [1-q_h,1-q_\ell)$, and define $q_\ell'$ and $q_h'$ so that $(\alpha',\beta')=(q_\ell', 1-q_h')$. Observe that this signal is positively biased if $q_\ell'<q_h'$, and negatively biased if the inequality is flipped. Under this signal, the fraction of voters for $B$ in state $\theta_B$ is $$\p{B|\theta_B}=\lambda\frac{q_{h}'(2q_\ell'-1)}{q_{h}'+q_\ell'-1}+\left(1-\lambda\right)q_{h}.$$
Now, if the signal is negatively biased, and so $q_\ell'\geq q_h'$, then $2q_\ell'-1\geq q_h'+q_\ell'-1$, and so the first term of $\p{B|\theta_B}$ is bounded below by $\lambda q_h' \geq \lambda/2$. The total fraction of voters for $B$ is then at least $q_h(1-\lambda)+\lambda/2$, the same as for signal $(q_h,1-q_\ell)$. But just like that signal, if $\lambda$ is slightly above $1-\frac{1}{2q_h}$ then a negatively biased signal $(\alpha',\beta')\in[q_\ell,q_h)\times [1-q_h,1-q_\ell)$ is also not optimal.

\item Assume $q_h\in(\frac{2}{3},\frac{\sqrt2}{2}).$  The limit of the share of $B$ voters in state $\theta_B$ when $q_\ell$ converges to $q_h$ (see Table \ref{tab:6-signals}) for each of the other four signals is equal to: 
\begin {enumerate}
\item Signal $(q_\ell,0)$: The limit is equal to $1-\frac{\left(1-q_{h}\right)\left(1-(1-\lambda)q_{h}\right)}{q_{h}}$, which is increasing in $q_h$ and decreasing in $\lambda$. When substituting $q_h=\frac{2}{3}$ and  $\lambda=1$, the expression is equal $50\%$, which implies that it is larger than $50\%$ for each $q_h>\frac{2}{3}$ and each $\lambda$. 
\item Signal $(q_\ell,1-q_\ell)$: The limit is equal to $q_{h}\left(1-\lambda(1-q_{h}\right))$, which is increasing in $q_h$ and decreasing in $\lambda$. When substituting $q_h=\frac{2}{3}$ and  $\lambda=\frac{2}{3}$, the expression is equal $\frac{14}{27}>50\%$, which implies that the limit is larger than $50\%$ for each $q_h>\frac{2}{3}$ and each $\lambda<\frac{2}{3}$. 
\item Signal $(q_\ell,1-q_h)$: The limit is equal to $q_h>\frac{2}{3}$,  which implies that the limit is larger than $50\%$ for each $q_h>\frac{2}{3}$ and each $\lambda$. 
\item Signal $(q_h,1-q_\ell)$: The limit is equal to $q_{h}^{2}$, which is smaller than $50\%$ for all values of $\lambda$ because $q_h<\frac{\sqrt2}{2}.$
\end{enumerate}
This implies that when $q_\ell$ is sufficiently close to $q_h$ and $\lambda\in(0.25,\frac{2}{3})$, then the unique optimal signal in the set of six signals of Lemma \ref{lem:opt-signals-general} is $(q_h, 1-q_\ell)$, which implies that all optimal signals are negatively biased due to  Lemma \ref{lem:opt-signals-general}. \qedhere
\end{enumerate}
\end{proof}

\section{Extensions}\label{sec:extensions}
\subsection{Continuous Signal Accuracies}\label{sec:continuous}
In this section we extend our model to a general distribution over signal accuracies (and not just binary). Specifically, we consider here a variant of our model in which the exogenous signal's accuracy is continuous, and is distributed according to density $f$ with a support 
$\supp(f)$ in $[0.5,1]$. 
We note that the closely related extension of the model to a setup with a finite (but greater than 2) number of signal accuracies is analogous (and is omitted for brevity).

Our result here is an extension of Proposition~\ref{pro-general-binary-signals} to this general setup, showing that the election is manipulable if all agents have accuracies of at most $\frac{2}{3}$, and that it is non-manipulable if all agents have accuracies of at least $\frac{\sqrt 2}{2}$. Formally,
\begin{proposition}
~
\begin{enumerate} 
\item If $\supp(f) \subseteq \left[\frac{\sqrt{2}}{2}, 1\right]$, then the designer cannot manipulate the election.
\item If $\supp(f)\subseteq \left[\frac{1}{2}, \frac{2}{3}\right]$, then the designer can manipulate the election.
\end{enumerate}
\end{proposition}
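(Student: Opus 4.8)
The plan is to treat the two parts separately, each time reducing to the homogeneous analysis already carried out (Corollary~\ref{cor-homogenous} together with the preliminaries of Section~\ref{sec:preliminaries}) applied one accuracy level at a time, and then aggregating against the density $f$. For a fixed designer signal $s$, write $g_s(q)$ for the probability that an agent of accuracy $q$ votes for $A$ in state $\theta_B$ under $s$; this depends only on $q$, and, since in a continuum the conditionally independent realizations aggregate deterministically, the electorate-wide share voting $A$ in state $\theta_B$ under $s$ equals $\int g_s(q)\,f(q)\,\dd q$. Because the share voting $A$ in state $\theta_A$ is never below the share in state $\theta_B$, and ties favor $A$, the designer can manipulate if and only if some $s$ makes $\int g_s(q)\,f(q)\,\dd q\ge\tfrac12$.

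For part~1, fix any $s$. Corollary~\ref{cor-homogenous}(1), read as a statement about a single accuracy level, gives $g_s(q)<\tfrac12$ for every $q>\tfrac{\sqrt2}{2}$. Since $f$ is a density supported in $[\tfrac{\sqrt2}{2},1]$, it has no atom at the single point $\tfrac{\sqrt2}{2}$, so $g_s(q)<\tfrac12$ for $f$-almost every $q$, whence $\int g_s(q)\,f(q)\,\dd q<\tfrac12$. As this holds for every $s$, the designer cannot manipulate. The only delicate point is the endpoint accuracy $\tfrac{\sqrt2}{2}$ itself, at which manipulation \emph{would} be possible in the homogeneous case (Corollary~\ref{cor-homogenous}(2)); excluding it is exactly what the ``density'' (atomless) hypothesis buys us.

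For part~2, I would exhibit an explicit optimal signal. Take $s=(\alpha,\beta)=(\tfrac23,0)$, a valid informative signal since $0=\beta<\tfrac12<\alpha=\tfrac23\le1$. Using the mapping in Section~\ref{sec:preliminaries} one computes $\p{s(\theta_A)=\fa}=1$, $\p{s(\theta_B)=\fa}=\tfrac{1-\alpha}{\alpha}=\tfrac12$, and $\p{\fb\mid\theta_A}=0$, so realization $\fb$ makes an agent certain the state is $\theta_B$, and any agent receiving $\fb$ votes $B$. An agent of accuracy $q\in\supp(f)\subseteq[\tfrac12,\tfrac23]$ who receives $\fa$ has posterior odds on $\theta_A$ at least $\tfrac{\alpha}{1-\alpha}\cdot\tfrac{1-q}{q}=2\cdot\tfrac{1-q}{q}\ge1$ (the worst case being an exogenous signal $b$), because $q\le\tfrac23=\alpha$; hence such an agent votes $A$. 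Consequently the share voting $A$ in state $\theta_B$ equals the share receiving $\fa$, namely $\tfrac12$, while in state $\theta_A$ every agent receives $\fa$ and votes $A$. With the $A$-favorable tie-break, $A$ wins in both states, so the designer manipulates. (One could equally use $(q^\star,0)$ for any $q^\star$ with $\sup\supp(f)\le q^\star\le\tfrac23$, obtaining a strict majority whenever $q^\star<\tfrac23$.)

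I do not expect a substantial obstacle: both parts are routine reductions to results already established for the binary model (parts~1 and 4 of Proposition~\ref{pro-general-binary-signals} are the binary analogues). The one place to be careful is the boundary accuracies: $\tfrac{\sqrt2}{2}$ in part~1 must be excluded via atomlessness of $f$, and $\tfrac23$ in part~2 produces an exact tie in state $\theta_B$ that is harmless only because ties break toward $A$.
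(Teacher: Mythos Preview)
Your proposal is correct and follows essentially the same approach as the paper: part~1 invokes Corollary~\ref{cor-homogenous} accuracy-by-accuracy and integrates, and part~2 exhibits a signal of the form $(\alpha,0)$ with $\alpha$ at least the highest accuracy in the support. The only cosmetic differences are that you pick $\alpha=\tfrac23$ (relying on the $A$-favorable tie-break in state $\theta_B$) whereas the paper takes $\alpha=\sup\supp(f)$, and that you handle the boundary point $q=\tfrac{\sqrt2}{2}$ in part~1 more carefully than the paper does, using atomlessness of $f$.
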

\begin{proof}
~
\begin{enumerate}
    \item Corollary \ref{cor-homogenous} implies that for each $q>\frac{\sqrt2}{2}$ all signals induce less than $50\%$ of agents with signal accuracy $q$ to vote for policy $A$ in state $\theta_B.$ This implies that when $\supp(f) \subseteq \left[\frac{\sqrt{2}}{2}, 1\right]$,  a majority of the voters support policy $B$ in  state $\theta_B$ irrespective of the exact distribution of signal accuracies (i.e., the designer cannot manipulate the elections).
      \item Let $q_h = \sup_{q\in Q} q$. Observe (Table \ref{tab:6-signals}) that the designer's signal $\left(q_h,0\right)$ induces a share of $\frac{2q_h-1}{q_h}$ of the agents to vote for $B$ in state $\theta_B$ (for all values of
      $q\leq q_h$), and that this share is smaller than $50\%$ iff $q_h\leq \frac{2}{3}.$ \qedhere
    \end{enumerate}
\end{proof}

\subsection{Targeted Information}\label{sec:targeted}

In our main analysis, we required the designer to design one signal for all voters. What if the designer can differentiate between voters based on their signal accuracies, and can thus design different signals for different voters? 

Suppose that the designer can distinguish between the two types of voters, $q_\ell$ and $q_h$. Furthermore, suppose she can commit to two different signals, $s_h$ and $s_\ell$, where the former targets $q_h$ voters and the latter targets $q_\ell$ voters.
Observe first that Proposition~\ref{pro-general-binary-signals} applies: if $q_\ell<q_h\leq 2/3$ then the $(q_h,0)$ signal is optimal regardless of the type of voter, and so the designer can manipulate the election, whereas if $q_h>q_\ell>\sqrt{2}/2$ then the designer cannot manipulate the election. The proof is the same as that of the proposition.

However,  targeted signals differ in the intermediate case. Note that for either $i\in\{\ell, h\}$, the signal that leads to the maximal fraction of $q_i$ voters to vote for $A$ is one of $(q_i, 0)$ or $(q_i, 1-q_i)$, by (the proof of) Lemma \ref{lem:opt-signals-general}. Both signals lead to a majority of voters for $A$ in state $\theta_A$. In state $\theta_B$, however, $(q_i, 0)$ leads to
$\p{A|\theta_B} = \frac{1-q_i}{q_i}$, whereas $(q_i, 1-q_i)$ leads to $\p{A|\theta_B} = 1-q_i^2$ (see Table~\ref{tab:6-signals}).
For $q_i\in[1/2, \frac{\sqrt{5}-1}{2})$ the former is larger, whereas for $q_i\in(\frac{\sqrt{5}-1}{2},1)$ the latter is larger.
Overall, the designer can manipulate the election if and only if
$$\lambda\cdot \max\left\{\frac{1-q_\ell}{q_\ell}, 1-q_\ell^2\right\}+(1-\lambda)\cdot \max\left\{\frac{1-q_h}{q_h}, 1-q_h^2\right\}\geq \frac{1}{2}.$$
Observe that, in contrast with Proposition~\ref{pro-general-binary-signals}, there is no non-monotonicity here: the left-hand-side of the inequality decreases with $q_\ell$ and with $\lambda$.

This last necessary and sufficient condition on the designer's ability to manipulate the election extends in a straightforward manner to the variant with a continuous distribution of signal accuracies from Section~\ref{sec:continuous} above. In this case, the designer can manipulate the election if and only if
$$\int_{1/2}^1 \max\left\{\frac{1-q}{q}, 1-q^2\right\}\cdot f(q) dq \geq \frac{1}{2}.$$

\subsection{Strongly Targeted Information}\label{sec:strongly-targeted}
What if the designer has even more finely grained information about voters, such that she knows not only each voter's signal accuracy but also the realization of this exogenous signal? In this case, the designer can provide a strongly targeted signal, in which voters with different signal accuracies {\em and different realizations} get different signals.

For a given signal accuracy $q$, a voter with realization $a$ has interim belief $q$ about the probability the state is $A$, whereas a voter with realization $b$ has interim belief $1-q$ about this probability. If the designer knows these realizations, she faces a standard Bayesian persuasion problem relative to each one of these kinds voters. In order to maximize the probability that such voters vote for $A$, the designer should supply the former voter with no additional signal, and the latter voter with additional signal $(q,0)$. The latter signal here is the optimal signal from the standard Bayesian persuasion setting \citep{kamenica2011bayesian}, supplied to a voter who has belief $1-q$ and is will to vote for $A$ once the belief is above $1/2$. Given these signals, the former voter will always vote for $A$, whereas the latter will vote for $A$ with probability $\frac{1-q}{q}$ in state $\theta_B$. In state $\theta_B$, the total fraction of voters with accuracy $q$ who vote for $A$ is thus
$$(1-q)\cdot 1 + q\cdot\frac{1-q}{q} = 2(1-q).$$

For a homogeneous population in which all agents have accuracy $q$, the designer can manipulate the elections iff $2(1-q)\geq\frac{1}{2}\Leftrightarrow q\leq \frac{3}{4}.$ When there are two signals accuracies, $q_\ell$ and $q_h$, the designer can then manipulate the election if and only if
$$2\lambda\cdot (1-q_\ell)+2(1-\lambda)\cdot (1-q_h)\geq \frac{1}{2}.$$

When there is a continuous distribution of signal accuracies, as in Section~\ref{sec:continuous} above, the designer can manipulate the election if and only if 
$$2\int_{1/2}^1 (1-q)f(q)dq\geq \frac{1}{2}.$$

\subsection{Social Media vs.\ Traditional Media}\label{sec:social-vs-traditional}
Our analysis and results lead to a straightforward comparison of the effects of social media (through {\em private} persuasion) and traditional media (through {\em public} persuasion). For the latter, suppose that instead of each voter obtaining a conditionally independent realization of the designer's signal, all voters obtain the same realization. Is such a public signal better or worse for the designer?

Observe that with a public signal, the designer can never manipulate the election with probability 1. However, she can always at least slightly increase the probability that voters vote for $A$ \citep[with a 
positively biased signal as in][]{kamenica2011bayesian}. Thus, the answer to whether this is better than private persuasion depends on whether or not the designer can manipulate the election in the latter case. If she can, then private persuasion by social media is better. If she cannot, then public persuasion by traditional media is better.

\section{Conclusion}\label{sec:conclusions}

In portraying private beliefs that contradict public information,  \citet{kuran1997private}  describes a government monopoly on information under autocracy. Monopoly on information also arises under democracy when, because of network externalities, a social media platform caters to designated users seeking to maintain contact with one another. In this paper we showed that manipulation through personalized private information in social media can influence beliefs. This, in turn, may invalidate the conclusion of Condorcet's jury theorem, a theorem that captures the benefits of majority voting. In particular, information manipulation through private persuasion can result in a majority voting for an information designer's preferred policy rather than for a socially preferred alternative. This compromise of Condorcet's jury theorem undermines the merits of democracy. 

From the perspective of theory, we focused on the direct effect of informational persuasion on voting outcomes by dispensing with the assumption, prominent in the literature on information design, that voters regard themselves as possibly pivotal. Instead, we viewed voting as sincere, and accounted for the paradox of voting by assuming that voters derive direct utility from expressing themselves through voting. Our general model covers a range of circumstances that differ in the extent to which users are exogenously informed and in the amount of heterogeneity in the accuracy levels of users' exogenous information. In all these circumstances we provided tight characterizations of when Condorcet's jury theorem can be overturned.

Our analysis and results provide a comparison between information manipulation by  private persuasion through social media and by public persuasion through traditional media. Under public persuasion, the designer can always slightly increase the probability that voters choose the designer's preferred outcome, but can never determine the outcome with probability one. Under private persuasion, in contrast, our results show that the designer either completely determines the outcome, or has no effect at all, depending on the circumstances. In the former case, a designer with a political agenda would thus prefer control of social media to information manipulation through traditional media, whereas in the latter case, the designer's preference would be reversed.

\appendix
\section {Appendix: Proof of Lemma \ref{lem:opt-signals-general} \label{proof-Lemma-possible}}

Fix any optimal $(\alpha', \beta')$, and note that, under this signal, 
$$\p{\fa|\theta_B}=1-\frac{2\alpha'+\beta'-1-2\alpha'\beta'}{\alpha'-\beta'}=\frac{(\alpha'-1)(2\beta'-1)}{\alpha'-\beta'}.$$
Observe that this probability is decreasing in $\alpha'$ and $\beta'$, and so lowering either of these leads to a higher probability of  realization $\fa$ in state $\theta_B$.

Suppose now that $\alpha'\notin\{q_\ell,q_h\}$. First, observe that $\alpha'$ cannot lie in the interval $[1/2, q_\ell)$. This is because, if it did, then when a voter obtains exogenous signal $b$, neither the pair $(b,\fa)$ nor the pair $(b,\fb)$ will cause the voter to vote $A$. This means that voters always vote $B$ on signal $b$. However, in state $\theta_B$ the signal $b$ is more likely than $a$, and so a majority will always vote for $B$ in this state. But this implies that $(\alpha',\beta')$ is not optimal, a contradiction.

Suppose next that $\alpha'\in (q_\ell, q_h)$. In this case, switching to signal $(\alpha,\beta')$, with $\alpha=q_\ell$ does not affect the behavior of any voter: Voters with signals $(a,\fb)$ and $(b,\fb)$ still make the same inferences, whereas voters with signals $(a,\fa)$ still vote for $A$, $q_\ell$ voters with signals $(b,\fa)$ still vote for $A$, and $q_h$ voters with signals $(b,\fa)$ still vote for $B$. Similarly, if $\alpha'\in (q_h, 1)$, then switching to signal $(\alpha,\beta')$, with $\alpha=q_h$ does not affect the behavior of any voter. In all cases, however, lowering $\alpha'$ to $\alpha$ increases the probability of signal $\fa$, and hence a weakly greater fraction of voters votes for $A$. This implies that $(\alpha, \beta')$ is optimal.

Finally, suppose $\beta'\notin\{0,1-q_h, 1-q_\ell\}$. If $\beta' \in (0, 1-q_h)$ (resp., $\beta' \in (1-q_h, 1-q_\ell)$ or $\beta' \in (1-q_\ell, 1)$) then lowering $\beta'$ to $\beta=0$ (resp., $\beta=1-q_h$ or $\beta=1-q_\ell$) does not affect the behavior of any voter. In all cases, however, lowering $\beta'$ to $\beta$ increases the probability of signal $\fa$, and hence a weakly larger share of voters votes for $A$. This implies that $(\alpha, \beta)$ is optimal.

\section{Appendix: Binary Signals are Optimal}\label{apx:binary-opt}
In this section we show that the restriction of the designer's signal to a binary one is without loss of generality.
\citet{kamenica2011bayesian} show that, in the standard Bayesian persuasion setting with a single, uninformed receiver, it is without loss of generality to restrict the designer to a  {\em straightforward} signal---one where each signal realization corresponds to a distinct action, these realizations are interpreted by the receiver as recommendations, and the receiver optimally follows the recommendation. 

When there are different types of receivers (in our case, ones with different exogenous-signal accuracies) who  have different information (in our case, different realizations of the exogenous signal), the argument of \citet{kamenica2011bayesian} does not directly apply. This is because different receivers would optimally take different actions even given the same signal realization, and so these realizations cannot be interpreted as recommendations. For a simple example consider homogeneous voters with signal accuracy $q\in(0.5,1)$, and where the designer's signal is $(0.5+\eps,0.5-\eps)$ for small $\eps>0$. In this case each voter votes according to their own exogenous information, and not according to the sender's ``recommended'' action $\fa$ or $\fb$.

However, an extended argument does show that binary signals are sufficient. Consider the general model of Section~\ref{sec:most-general-case}. Then:
\begin{lemma}\label{lem:binary-signal-suffices}
If there exists an optimal signal, then there exists an optimal binary signal.
\end{lemma}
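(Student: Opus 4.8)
\noindent\emph{Proof proposal.} The plan is to reformulate the designer's problem as a one‑dimensional concavification over likelihood ratios. First, fix a general signal $s$ with an arbitrary message space and, for a message $m$, write $\ell(m)=\p{m\mid\theta_A}/\p{m\mid\theta_B}$ for its likelihood ratio. Since the designer's signal is conditionally independent of each voter's exogenous signal given the state, a voter who saw exogenous realization $x$ and designer realization $m$ votes for $A$ iff $\p{x\mid\theta_A}\p{m\mid\theta_A}\ge\p{x\mid\theta_B}\p{m\mid\theta_B}$, i.e.\ iff $\ell(m)\ge\tau(p):=\tfrac{1-p}{p}$, where $p\in\{q_\ell,1-q_\ell,q_h,1-q_h\}$ is the voter's interim belief in $\theta_A$. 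Thus a voter's response depends on the (possibly high‑dimensional) message only through the scalar $\ell(m)$, compared against one of at most four thresholds. Letting $\mu_\theta$ denote the law of $\ell(s(\theta))$ under state $\theta$, and $w^B_p$ the population fraction whose interim belief equals $p$ in state $\theta_B$ (e.g.\ $w^B_{q_\ell}=\lambda(1-q_\ell)$, since a fraction $1-q_\ell$ of the low‑accuracy voters see signal $a$ in $\theta_B$), the fraction voting for $A$ in $\theta_B$ equals $\int H\,d\mu_B$, where $H(\ell):=\sum_{p:\,\tau(p)\le\ell}w^B_p$ is a bounded, non‑decreasing, upper semicontinuous step function depending only on the population.

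The first key step is to show the state‑$\theta_A$ requirement is automatically met by any signal that wins in $\theta_B$: the $A$‑share is always at least as large in $\theta_A$ as in $\theta_B$. Indeed, for voters of any fixed accuracy, the set $E$ of joint realizations $(x,m)$ on which such a voter votes for $A$ is, by the criterion above, exactly $\{(x,m):\p{x,m\mid\theta_A}\ge\p{x,m\mid\theta_B}\}$, so $\p{E\mid\theta_A}=\sum_{(x,m)\in E}\p{x,m\mid\theta_A}\ge\sum_{(x,m)\in E}\p{x,m\mid\theta_B}=\p{E\mid\theta_B}$; summing over accuracy levels gives the claim. Hence a signal is optimal iff $\int H\,d\mu_B\ge\tfrac12$; and since Bayes‑plausibility says exactly that $\mu_B$ is a probability measure on $[0,\infty)$ with $\int\ell\,d\mu_B=1$, while conversely every such $\mu_B$ with at most two atoms is the state‑$\theta_B$ likelihood‑ratio law of a binary signal, it suffices to find a two‑atom $\mu_B$ with mean $1$ and $\int H\,d\mu_B\ge\tfrac12$.

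This is where concavification enters. Let $\hat H$ be the concave envelope of $H$ on $[0,\infty)$ (equivalently on the compact interval $[0,\tau_{\max}]$, beyond which $H$ is constant). As $\hat H\ge H$ is concave, Jensen gives $\int H\,d\mu\le\hat H\!\left(\int\ell\,d\mu\right)=\hat H(1)$ for every probability measure $\mu$ of mean $1$; applied to the optimal signal's $\mu_B^*$, this yields $\hat H(1)\ge\int H\,d\mu_B^*\ge\tfrac12$. A standard argument (extreme points of the set of measures with prescribed mass and mean, or Carathéodory on the boundary of the planar convex hull of the graph of $H$) shows $\hat H(1)$ is attained by some $\mu_B'=\gamma\delta_{\ell_1}+(1-\gamma)\delta_{\ell_2}$ with $\gamma\ell_1+(1-\gamma)\ell_2=1$ and $\ell_1,\ell_2\in[0,\tau_{\max}]$. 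If $\ell_1=\ell_2=1$ then $H(1)\ge\tfrac12$, i.e.\ $A$ already wins and any binary signal is optimal; otherwise $\ell_2<1<\ell_1$, and the binary signal with posteriors $\alpha=\tfrac{\ell_1}{1+\ell_1}>\tfrac12$ and $\beta=\tfrac{\ell_2}{1+\ell_2}<\tfrac12$ (which exists and is unique, by Section~\ref{sec:preliminaries}) has state‑$\theta_B$ likelihood‑ratio law exactly $\mu_B'$. Therefore it induces a share $\int H\,d\mu_B'=\hat H(1)\ge\tfrac12$ to choose $A$ in $\theta_B$, and is optimal by the previous step.

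The main obstacle is the one the paper flags: with heterogeneous, already‑informed receivers the designer's realizations are not recommendations, so the Kamenica--Gentzkow ``straightforward signal'' reduction does not apply verbatim. The right substitute is the observation that, for all receiver types at once, the only payoff‑relevant feature of a message is its likelihood ratio, together with the slackness of the $\theta_A$ constraint; these collapse the design problem to a scalar concavification, where two realizations---one strongly pro‑$A$, one pro‑$B$---always suffice. The only remaining technical point is routine real analysis: the attainment of $\hat H(1)$ by a two‑atom distribution, handled by restricting attention to the compact interval $[0,\tau_{\max}]$ on which $H$ is upper semicontinuous.
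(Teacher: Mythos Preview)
Your proof is correct and takes a genuinely different route from the paper's. The paper first invokes a Kamenica--Gentzkow ``recommendation'' argument to cap the number of realizations at $2^4=16$ (one action recommendation per $(q,\text{exogenous realization})$ pair), and then proves a bespoke splitting lemma (Lemma~\ref{lem:split}) that writes any $n$-realization signal as a mixture of two signals with strictly fewer realizations sharing at most one posterior; a minimal-counterexample argument then forces $n\le 2$. Your approach instead collapses the design problem to a one-dimensional concavification: you observe that every receiver's action depends on the message only through the likelihood ratio $\ell$, prove the $\theta_A$ constraint is slack, and reduce optimality to $\int H\,d\mu_B\ge\tfrac12$ for a non-decreasing, right-continuous step function $H$ and $\mu_B$ a mean-one law on $[0,\infty)$. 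Carath\'eodory on the upper boundary of the planar convex hull of the (finite-step) graph of $H$ then delivers a two-atom maximizer.

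What each buys: the paper's argument is elementary and self-contained, needing no convex-analytic machinery, but it requires the auxiliary Lemma~\ref{lem:split} and the separate preliminary reduction to finitely many realizations. Your argument is shorter and more conceptual, handles arbitrary (even non-finite) message spaces in one stroke, and extends verbatim to any finite number of accuracy levels. It also yields more: since the two atoms of the optimal $\mu_B'$ land at jump points $\tau(p)$ of $H$, the corresponding posteriors are $\alpha=1-p\in\{q_\ell,q_h\}$ and $\beta=1-p'\in\{0,1-q_h,1-q_\ell\}$, which is exactly the content of Lemma~\ref{lem:opt-signals-general}. One small phrasing issue: in the degenerate case $\ell_1=\ell_2=1$ it is not true that ``any binary signal is optimal''; rather, the uninformative (degenerate binary) signal is optimal, which is all you need.
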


\begin{proof}
In our model there are four different types of voters, distinguished by their signal accuracies ($q_\ell$ or $q_h$) and the realization of their exogenous signal ($a$ or $b$). 
We note that our argument works for any finite, arbitrarily large number of voters' types.

Now, suppose there exists an optimal signal. We first argue that there also exists an optimal signal with a finite number of realizations. This follows from the argument of \citet{kamenica2011bayesian}, which implies that there exists an optimal signal in which there are at most 16 realizations, as follows: Each realization of the signal is a recommendation to vote either for $A$ or for $B$ for each of the four different types of voters. There are $2^4$ such four-tuples of recommendations, hence 16 signal realizations.

Next, let $s$ be an optimal designer signal with the {\em minimal} number of signal realizations, and suppose towards a contradiction that $s$ is not binary.
Let $n>2$ be the number of its signal realizations, and denote these realizations by $\supp(s)=\{r_1,\ldots,r_n\}$. By Lemma~\ref{lem:split} below, the signal $s$ is equal to a distribution over two signals, 
$s_1$ and $s_2$, with the following four properties:
(1) $s\equiv \eta s_1 + (1-\eta)s_2$ with $\eta\in(0,1)$, (2) $\abs{\supp(s_1)},\abs{\supp(s_2)}<n$, (3)
$\abs{\supp(s_1)\cap \supp(s_2)}\leq 1$, and
(4) if $r_i \in \supp(s_1)\cap \supp(s_2)$ then  $\p{\theta_B|s_1=r_i}=\p{\theta_B|s_2=r_i}$.

Now, $s$ is optimal if and only if the probability it induces a random voter to vote for $A$ in state $\theta_B$ is at least $1/2$. Since $s$ is a mixture of $s_1$ and $s_2$ 
(and the unique shared realization induces the same posterior), either $s_1$ or $s_2$ must also be optimal. 
For otherwise, if in state $\theta_B$ a random voter has probability less than $1/2$ of voting for $A$ under each of $s_1$ and $s_2$, then this must also be the case under $s$. However, $s_1$ and $s_2$ have fewer realizations than $s$, contradicting the assumption that $s$ is an optimal signal with the minimal number of realizations.
\end{proof}

\begin{lemma}\label{lem:split}
Fix a non-binary signal $s$ with $n>2$ signal realizations, and denote those realizations by $\supp(s)=\{r_1,\ldots,r_n\}$. Then there exist signals $s_1, s_2$ with the following properties:
\begin{enumerate}
    \item $s\equiv \eta s_1 + (1-\eta)s_2$, where $\eta\in(0,1)$, 
    \item $\abs{\supp(s_1)},\abs{\supp(s_2)}<n.$ 
    \item $\abs{\supp(s_1)\cap \supp(s_2)}\leq 1$, and
    \item if $r_i \in \supp(s_1)\cap \supp(s_2)$ then $\p{\theta_B|s_1=r_i}=\p{\theta_B|s_2=r_i}$.
\end{enumerate}
\end{lemma}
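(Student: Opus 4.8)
The plan is to pass to the standard representation of a signal by the distribution over posteriors it induces from the uniform prior, which linearizes everything. Write $\p{r_k\mid\theta_A,s}=c_k^A$ and $\p{r_k\mid\theta_B,s}=c_k^B$, so that under the uniform prior the realization $r_k$ occurs with total probability $p_k=\tfrac12(c_k^A+c_k^B)>0$ and induces posterior $\mu_k=\p{\theta_B\mid r_k}=c_k^B/(2p_k)$; conversely $c_k^B=2p_k\mu_k$ and $c_k^A=2p_k(1-\mu_k)$, and the requirement that $s$ be a genuine signal is exactly $\sum_k p_k=1$ together with Bayes-plausibility $\sum_k p_k\mu_k=\tfrac12$. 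The first step I would record is a bookkeeping observation: to produce a decomposition $s\equiv\eta s_1+(1-\eta)s_2$ it suffices to choose, for each $k$, nonnegative numbers $q_k^1,q_k^2$ with $q_k^1+q_k^2=p_k$ such that each of the vectors $(q_k^1)_k$ and $(q_k^2)_k$ is \emph{balanced}, i.e. $\sum_k q_k^j(\mu_k-\tfrac12)=0$. Indeed, set $\eta=\sum_k q_k^1$, let $s_j$ be the signal whose realizations are $\{r_k:q_k^j>0\}$, with probabilities $q_k^j$ divided by the total mass and with posteriors $\mu_k$ \emph{unchanged}; each $s_j$ is a valid Bayes-plausible signal precisely because $(q_k^j)_k$ is balanced, and a short computation using $c_k^B=2p_k\mu_k$, $c_k^A=2p_k(1-\mu_k)$ shows $\eta s_1+(1-\eta)s_2$ reproduces $c_k^A,c_k^B$. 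In this language property (4) is automatic (a realization common to $s_1,s_2$ keeps its posterior $\mu_k$ in both), property (2) is the statement that each $(q_k^j)_k$ has a zero coordinate, and property (3) that at most one index $k$ has both $q_k^1>0$ and $q_k^2>0$.

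With this reformulation the construction is short and splits into two cases. If some realization $r_m$ has posterior $\mu_m=\tfrac12$, take $q_m^1=p_m$, $q_k^1=0$ for $k\ne m$, and $q_k^2=p_k-q_k^1$: both vectors are balanced since $p_m(\mu_m-\tfrac12)=0$ and $\sum_k p_k(\mu_k-\tfrac12)=0$; the supports $I_1=\{m\}$ and $I_2=\{1,\dots,n\}\setminus\{m\}$ are disjoint and of sizes $1<n$ and $n-1<n$; and $\eta=p_m\in(0,1)$. Otherwise every $\mu_k\ne\tfrac12$, and since $\sum_k p_k(\mu_k-\tfrac12)=0$ with all $p_k>0$ there is an index $\ell$ with $\mu_\ell<\tfrac12$ and an index $h$ with $\mu_h>\tfrac12$. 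Set $q_\ell^1=\min\bigl\{p_\ell,\;p_h\tfrac{\mu_h-1/2}{1/2-\mu_\ell}\bigr\}$, $q_h^1=q_\ell^1\tfrac{1/2-\mu_\ell}{\mu_h-1/2}$ (so $q_\ell^1(\mu_\ell-\tfrac12)+q_h^1(\mu_h-\tfrac12)=0$ and $q_h^1\le p_h$), $q_k^1=0$ for $k\notin\{\ell,h\}$, and $q_k^2=p_k-q_k^1$. Then $(q_k^2)_k$ is balanced because $(q_k^1)_k$ and $(p_k)_k$ are; by the choice of the minimum at least one of $q_\ell^1=p_\ell$ or $q_h^1=p_h$ holds, so $I_2$ omits at least one of $\ell,h$, giving $|I_2|\le n-1<n$, while $I_1=\{\ell,h\}$ has size $2<n$ (here $n>2$ enters) and $|I_1\cap I_2|\le1$; finally $\eta=q_\ell^1+q_h^1>0$ and $1-\eta=\sum_k(p_k-q_k^1)\ge\sum_{k\notin\{\ell,h\}}p_k>0$, again using $n>2$, so $\eta\in(0,1)$.

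The only genuinely delicate point is the equivalence recorded in the first paragraph --- that a mixture of signals corresponds exactly to a ``weight-splitting with posteriors preserved,'' so that the shared-realization condition (4) is forced rather than an additional requirement; this is a direct consequence of the uniform-prior identities $c_k^B=2p_k\mu_k$ summed against the mixing weights. Everything after that is elementary. I would also remark that coincidences among the $\mu_k$ cause no difficulty, since realizations are kept as distinct labels throughout and the two cases above only ever place a single realization of a given posterior value into $s_1$. I expect the written-out proof to be well under a page.
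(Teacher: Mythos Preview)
Your proof is correct and follows essentially the same strategy as the paper's: represent $s$ via its posterior distribution from the uniform prior, then peel off a binary signal $s_1$ supported on two realizations whose posteriors straddle $\tfrac12$, with $s_2$ absorbing the residual mass. The paper picks the two \emph{extreme} posteriors $\alpha_1>\tfrac12>\alpha_n$ and then runs a three-way case split on the ratio $p_\fa/p_\fb$ versus $p_1/p_n$ to decide which realization (if any) remains shared; your $\min$ construction on an arbitrary straddling pair $\{\ell,h\}$ accomplishes the same thing in one stroke, and your ``balanced weight-splitting'' reformulation makes property~(4) automatic rather than something to be checked ex post. You also handle the degenerate case $\mu_m=\tfrac12$ separately and more cleanly than the paper, which folds it into the non-informative case. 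The content is the same; your packaging is slightly tidier.
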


\begin{proof}
Denote by $(\alpha_1,\ldots,\alpha_n)$ the posteriors induced by the $n$ signal realizations of $s$ on state $\theta_A$, starting with prior $1/2$. This is simply the generalization of the $(\alpha,\beta)$ notation for binary signals to signals with more than two realizations. Suppose $r_i$ is the signal realization associated with posterior $\alpha_i$, and denote by $p_i=\p{s=r_i}$ (the {\em unconditional} probability of this realization).

If signal $s$ is non-informative, then the result is immediate (as each $s_i$ can be a non-informative signal with a single realization). Otherwise, there must be at least one $\alpha_i$  above $1/2$ and at least one $\alpha_j$ below $1/2$.  Without loss of generality, assume that $\alpha_1>\ldots>\alpha_n$, and note that $\alpha_1>1/2$ and $\alpha_n<1/2$. 

Consider the binary signal $s_1=(\alpha_1,\alpha_n)$ with $\supp(s_1)=\{r_1,r_n\}$. For this signal, let 
$$p_{\fa}=\p{s_1=r_1}=\frac{\frac{1}{2}-\alpha_n}{\alpha_1-\alpha_n}~\mbox{ and }~p_{\fb}=\p{s_1=r_n}=\frac{\alpha_1-\frac{1}{2}}{\alpha_1-\alpha_n},$$
where the respective probabilities are computed as in Section~\ref{sec:preliminaries}.

We now consider three cases. First, if $\frac{p_{\fa}}{p_{\fb}}=\frac{p_1}{p_n}$, then let $\eta=p_1+p_n$, and let $s_2$ be the signal whose realizations are $\{r_2,\ldots,r_{n-1}\}$, and where $\p{s_2(\theta)=r_i} = \p{s(\theta)=r_i}/(1-\eta)$ for each $i\in\{2,\ldots,n-1\}$ and $\theta\in\{\theta_A,\theta_B\}$. Observe that $s_2$ yields posterior beliefs $(\alpha_2,\ldots,\alpha_{n-1})$. It is straightforward to verify that these satisfy the claim of the lemma.

Second, suppose $\frac{p_{\fa}}{p_{\fb}}<\frac{p_1}{p_n}$. In this case, let $\eta =  \frac{p_{\fa}}{p_{\fb}}p_n+p_n$, and let $s_2$ be the signal whose realizations are $\{r_1,r_2,\ldots,r_{n-1}\}$,  where $\p{s_2(\theta)=r_i} = \p{s(\theta)=r_i}/(1-\eta)$ for each for each $i\in\{2,\ldots,n-1\}$ and $\theta\in\{\theta_A,\theta_B\}$, and where $\p{s_2(\theta)=r_1} = \left(\p{s(\theta)=r_1}-\eta\p{s_1(\theta)=r_1}\right)/(1-\eta)$ for each  $\theta\in\{\theta_A,\theta_B\}$. Observe that $s_2$ yields posterior beliefs $(\alpha_1,\alpha_2,\ldots,\alpha_{n-1})$. Again, it is straightforward to verify that these satisfy the claim of the lemma. 
In particular, the only shared realization $r_1$ induces the same posterior belief under both signals: $\p{\theta_A|s_1=r_1}=\p{\theta_A|s_2=r_1}$.

Finally, suppose $\frac{p_{\fa}}{p_{\fb}}>\frac{p_1}{p_n}$. This case is analogous to the second case, except that $\eta = p_1 +  \frac{p_{\fb}}{p_{\fa}}p_1$,
 $r_n$ replaces $r_1$ in $\supp(s_2)$, the signal $s_2$ yields beliefs $(\alpha_2,\ldots,\alpha_{n-1}, \alpha_n)$, and $s_1$ and $s_2$ share the realization $r_n$.
\end{proof}

\bibliography{socialMediaDemocracy}

\end{document}